\documentclass[11pt]{article}
\usepackage{graphicx}
\usepackage[margin=1in]{geometry}
\usepackage{amsmath}
\usepackage{amssymb}
\usepackage{amsthm}
\usepackage{accents}
\usepackage[maxfloats=128]{morefloats}
\usepackage[breaklinks,hidelinks]{hyperref}
\newcommand{\sgn}{\mathop{\mathrm{sgn}}}

\newcommand{\E}{{\bf E}}

\newtheorem{theorem}{Theorem}[]
\newtheorem{lemma}[theorem]{Lemma}
\newtheorem{corollary}[theorem]{Corollary}

\newtheorem{remark1}[theorem]{Remark}

\setlength{\floatsep}{.6in plus .1in minus .1in}
\setlength{\textfloatsep}{\floatsep}
\renewcommand{\epsilon}{\varepsilon}

\title{Secure multiparty computations in floating-point arithmetic}
\author{Chuan Guo, Awni Hannun, Brian Knott, Laurens van der Maaten,\\
Mark Tygert, and Ruiyu Zhu}

\begin{document}

\maketitle

\begin{abstract}
Secure multiparty computations enable the distribution
of so-called shares of sensitive data to multiple parties
such that the multiple parties can effectively process the data
while being unable to glean much information about the data
(at least not without collusion among all parties
to put back together all the shares).
Thus, the parties may conspire to send all their processed results
to a trusted third party (perhaps the data provider)
at the conclusion of the computations,
with only the trusted third party being able to view the final results.
Secure multiparty computations for privacy-preserving machine-learning
turn out to be possible using solely standard floating-point arithmetic,
at least with a carefully controlled leakage of information less than
the loss of accuracy due to roundoff, all backed
by rigorous mathematical proofs of worst-case bounds on information loss
and numerical stability in finite-precision arithmetic.
Numerical examples illustrate the high performance
attained on commodity off-the-shelf hardware for generalized linear models,
including ordinary linear least-squares regression,
binary and multinomial logistic regression, probit regression,
and Poisson regression.
\end{abstract}

\section{Introduction}

Passwords and long account and credit-card numbers
are the dominant security measures,
not because they are the most secure,
but because they are the most conveniently implemented.
Some data demands the highest levels of security and privacy protections,
while for other data processing efficiency and sheer convenience are paramount
--- some security is better than none (which tends to be the alternative).
The present paper proposes privacy-preserving, secure multiparty computations
performed solely in the IEEE standard double-precision arithmetic that
dominates most platforms for numerical computations.
The scheme amounts to lossy, leaky cryptography,
with the loss of accuracy and leakage of information
carefully controlled via mathematical analysis and rigorous proofs.
Information loss balances against roundoff error,
providing perfect privacy at a specified finite precision of computations.

Perfect privacy at a given precision is when the information leakage
is less than the specified precision
(precision being limited due to roundoff error).
The present paper provides perfect privacy at a precision of about $10^{-5}$
in the IEEE standard double-precision arithmetic of~\cite{ieee};
observing the encrypted outputs
leaks no more than a millionth of a bit per input real number,
whereas roundoff alters the results by around one part in a hundred thousand.
In a megapixel image, the encrypted image would leak at most a single bit ---
enough information to discern whether the original image is dim or bright,
perhaps, but no more. Performing all computations in floating-point arithmetic
facilitates implementations on existing hardware, including both commodity
central-processing units (CPUs) and graphics-processing units (GPUs),
whereas alternative methods based on integer modular arithmetic could require
difficult specialized optimizations to attain performance on par
with the scheme proposed in this paper (even then, schemes based
on integer modular arithmetic would have to contend with tricky issues
of discretization and precision in order to handle the real numbers
required for machine learning and statistics).

The algorithms and analysis consider the traditional {\it honest-but-curious}
model of threats:
we assume that the multiple parties follow the agreed-upon protocols correctly
but may try to glean information from data they observe;
the secure multiparty computations prevent any of the parties from gleaning
much information without all parties conspiring together to break the scheme.

Our analysis provides no guarantees about what information leaks
when all parties collude to reveal encrypted results.
If all parties conspire to collect together all their shares
or send them to a collecting agency, then the unified collection
will reveal the secrets of whatever results get collected.
If the collected information results from training a machine-learned model,
then revealing the trained model can compromise the confidentiality
of the data used to train that model,
unless the model is differentially private.
Ensuring privacy even after revealing the results
of secure multiparty computations is complementary
to securing the intermediate computations.
The present paper only guarantees the privacy of the intermediate computations,
providing no guarantees about what leaks when all parties collude
to reveal the final results of their secure multiparty computations.

This paper has the following structure:
Section~\ref{smpc} introduces secure multiparty computations
in floating-point arithmetic, reviewing classical methods
such as additive sharing and Beaver multiplication.
Section~\ref{leakage} upper-bounds the amount of information that can leak,
referring to Appendices~\ref{mainproof}, \ref{simpleproof}, and~\ref{detailed}
for full, rigorous proofs.
Section~\ref{polynomials} reviews techniques for efficient, highly accurate
polynomial approximations to many real functions of interest
(notably those in Table~\ref{polytable}).
Section~\ref{numex} validates an implementation on synthetic examples
and illustrates its performance on real measured data, too;
the examples apply various generalized linear models,
including ordinary linear least-squares regression,
binary and multinomial logistic regression, probit regression,
and Poisson regression.
Appendices~\ref{chebappendix}, \ref{sgdrev}, and~\ref{glmrev}
very briefly review Chebyshev series, minibatched stochastic gradient descent,
and generalized linear models, respectively --- readers may wish to refer
to those appendices as concise refreshers.

Throughout, all numbers and random variables are real-valued,
even when not stated explicitly.

\section{Secure multiparty computations}
\label{smpc}

Secure multiparty computations allow holders of sensitive data
to securely distribute so-called shares of their data to multiple parties
such that the multiple parties can process the data without revealing the data
and can only reconstruct the data by colluding to put back together
(that is, to sum) all the shares. We briefly review an arithmetic scheme
in the present section.
The arithmetic scheme supports addition and multiplication,
as discussed in the present section, as well as functions that polynomials
can approximate accurately, as discussed in Section~\ref{polynomials} below.
To be concrete and simplify the presentation,
we focus first on secure two-party computations,
in Subsection~\ref{two-party}, then sketch an extension to several parties
in Subsection~\ref{several}.

\subsection{Two-party computations}
\label{two-party}

We can hide a matrix $X$ by masking with a random matrix $Y$,
so that one party holds $X-Y$ and the other party holds $Y$.
Given other data, say $U$, and another random matrix $V$ hiding $U$,
so that one party holds $U-V$ and the other party holds $V$,
the parties can then independently form the sums
$(U-V)+(X-Y)$ and $V+Y$ required to reconstruct
$U+X = (U-V)+(X-Y)+(V+Y)$ if all these matrices have the same dimensions.
This is known as ``additive sharing,''
as described, for example, by~\cite{bogdanov-laur-willemson}.
Additive sharing thus supports privacy-preserving addition of $U$ and $X$.

As introduced by~\cite{beaver}, used by~\cite{bogdanov-laur-willemson},
and reviewed in Table~\ref{Beaver},
privacy-preserving multiplication of $U$ and $X$
is also possible whenever $U$ and $X$ are matrices such that their product $UX$
is well-defined.
Summing across the two parties in line~10 of Table~\ref{Beaver}
would yield $PR + (U-P)R + P(X-R) + (U-P)(X-R) = UX$, as desired.
Notice that $U-P$ and $X-R$ in lines~8 and~9 of Table~\ref{Beaver}
are still masked by the random matrices $P$ and $R$ whose values are unknown
to the two parties. The all-reduce in lines~8 and~9 requires
the parties to conspire and distribute to each other the results
of summing their respective shares of lines~6 and~7,
but does not require the parties to get shares of any secret distributed data
--- that was necessary only for the original data $U$ and $X$ being multiplied
(and this ``original'' data can be the result
of prior secure multiparty computations).
Also, all values on lines~3--5 are independent of $U$ and $X$,
so can be precomputed and stored on disk.
As with addition, multiplication via the Beaver scheme never requires
securely distributing secret shares, so long as the input data 
and so-called Beaver triples $(P, R, PR)$ from Table~\ref{Beaver} have already
been securely distributed into shares.
The secure distribution of shares is completely separate
from the processing of the resulting distributed data set.
Communication among the parties during processing is solely
via the all-reduce in lines~8 and~9.

Table~\ref{Beaver}'s scheme also works with matrix multiplication
replaced by convolution of sequences.

\begin{table}
\begin{center}
\begin{tabular}{llll}
line & source & party 1 & party 2 \\\hline
1 & distributed data & $U-V$ & $V$ \\
2 & distributed data & $X-Y$ & $Y$ \\
3 & read from disk & $P-Q$ & $Q$ \\
4 & read from disk & $R-S$ & $S$ \\
5 & read from disk & $PR-T$ & $T$ \\
6 & line 1 $-$ line 3 & $(U-V)-(P-Q)$ & $V-Q$ \\
7 & line 2 $-$ line 4 & $(X-Y)-(R-S)$ & $Y-S$ \\
8 & all reduce line 6 & $U-P$ & $U-P$ \\
9 & all reduce line 7 & $X-R$ & $X-R$ \\
10 & line 5 $+$ (line 8)(line 4) $+$ &
$(PR-T) + (U-P)(R-S)$ $+$ &
$T + (U-P)S$ $+$ \\
& (line 3)(line 9) $+$ & $(P-Q)(X-R)$ $+$ &
$Q(X-R)$ $+$ \\
& (line 8)(line 9)/2 & $(U-P)(X-R)/2$ & $(U-P)(X-R)/2$
\end{tabular}
\end{center}
\caption{Ledgers for two parties in a Beaver multiplication of $U$ and $X$}
\label{Beaver}
\end{table}

Table~\ref{Beaver} simplifies to Table~\ref{squaring} for the case
when $U$ and $X$ are scalars such that $U = X$.
Summing across the two parties in line~6 of Table~\ref{squaring}
yields $P^2 + 2P(X-P) + (X-P)^2 = X^2$, as desired.
Notice that this recovers $X^2$ from a sum involving several terms
as large as $P^2$; if $|X| \le 1 < 3 < \gamma$ and $|P| \le \gamma$
(as well as $|T| \le \gamma^2$, where $T$ cancels when summing across
the two parties in line~6),
then we obtain $X^2$ to precision upper-bounded by $6\gamma^2 \cdot \epsilon$,
where $\epsilon$ denotes the machine precision
($\epsilon$ is approximately $2.2 \times 10^{-16}$
in the IEEE standard double-precision arithmetic of~\cite{ieee}).
Theorem~\ref{Beaverdetails} proves that the information leakage may be
as large as $6/\gamma$ if $P$, $Q$, and $Y$ are distributed uniformly
over $[-\gamma, \gamma]$ and $T$ is distributed uniformly
over $[-\gamma^2, \gamma^2]$
(similarly, $P$, $Q$, $R$, $S$, $V$, and $Y$ in Table~\ref{Beaver}
should be distributed uniformly over $[-\gamma, \gamma]$
while $T$ should be distributed uniformly over $[-\gamma^2, \gamma^2]$).
Balancing the roundoff bound with the information bound requires
$6\gamma^2 \cdot \epsilon = 6/\gamma$, so that $\gamma = 1/\sqrt[3]{\epsilon}$
(so $\gamma \approx 10^5$ for IEEE standard double-precision arithmetic).

\begin{table}
\begin{center}
\begin{tabular}{llll}
line & source & party 1 & party 2 \\\hline
1 & distributed data & $X-Y$ & $Y$ \\
2 & read from disk & $P-Q$ & $Q$ \\
3 & read from disk & $P^2-T$ & $T$ \\
4 & line 1 $-$ line 2 & $(X-Y)-(P-Q)$ & $Y-Q$ \\
5 & all reduce line 4 & $X-P$ & $X-P$ \\
6 & line 3 $+$ (line 2)(line 5) $\cdot$ 2 $+$ &
$(P^2-T) + (P-Q)(X-P) \cdot 2$ $+$ &
$T + Q(X-P) \cdot 2$ $+$ \\
& (line 5)$^2$/2 & $(X-P)^2/2$ & $(X-P)^2/2$
\end{tabular}
\end{center}
\caption{Ledgers for two parties in a Beaver squaring of $X$}
\label{squaring}
\end{table}

\subsection{Several-party computations}
\label{several}

Extending Subsection~\ref{two-party} beyond two parties is straightforward.
The steps in the algorithms, summarized in the columns labeled ``source''
in Tables~\ref{Beaver} and~\ref{squaring}, stay as they were
(the division by 2 in the last line of Table~\ref{Beaver}
and in the last line of Table~\ref{squaring} becomes division
by the number of parties).
Distributing additive shares of data across several parties works as follows:
for each piece of data to be distributed (in machine learning,
a ``piece'' may naturally be a sample or example from the collection
of all samples or examples), we generate $n$ independent
and identically distributed random matrices $Y_1$, $Y_2$, \dots, $Y_n$,
where $n$ is the number of parties (the number of parties need not relate
to the total number of pieces, samples, or examples of data being distributed).
We randomly permute the parties and then distribute to them
(in that random order)
$X + Y_1 - Y_2$, $Y_2 - Y_3$, $Y_3 - Y_4$, \dots, $Y_{n-1} - Y_n$, $Y_n - Y_1$,
where $X$ is the piece of data being shared.
We generate different independent random variables and random permutations
for different pieces of data.
The distribution of the difference between independent random matrices
drawn from the same distribution is the same for each party,
making this an especially simple generalization to the case
of several parties.
Distributing additive shares to several parties leaks
somewhat more information than limiting to only two parties; the present paper
focuses on the case of two parties for simplicity.

\section{Information leakage}
\label{leakage}

This section bounds the amount of information-theoretic entropy that can leak
when adding noise for masking, drawing heavily on canonical concepts
from information theory, as detailed, for example, by~\cite{cover-thomas}.

We denote by $X$ the scalar random variable that we want to hide,
and by $Y$ an independent variate that we add to $X$ to effect the hiding.
To simplify the analysis, we assume that the distribution of $Y$ arises
from a probability density function.
Then, revealing $X+Y$ leaks the following number of bits of information
about $X$:
\begin{equation}
\label{mutual}
H(X) - H(X \;|\; X+Y) = I(X;\; X+Y) = H(X+Y) - H(X+Y \;|\; X) = H(X+Y) - H(Y).
\end{equation}
In the left-hand side of~(\ref{mutual}),
$H$ denotes the Shannon entropy measured in bits if the distribution of $X$
is discrete, and the differential entropy measured in bits (rather than nats)
if the distribution of $X$ is continuous.
In the right-hand sides of~(\ref{mutual}),
$H$ denotes the differential entropy measured in bits (not nats);
$I$ denotes the mutual information.
Given a prior on $X$, Bayes' Rule yields the full posterior distribution
for $X$ given $X+Y$; the information gain (or loss or leakage)
defined in~(\ref{mutual}) is a summary statistic characterizing the divergence
of the posterior from the prior.

Recall that mutual information is the fundamental limit on how much information
can be gleaned from observing the outputs of a noisy channel;
in our setting, we purposefully add noise in order to reveal only the results
of communications via a (very) noisy channel, purposefully obscuring with noise
the signal containing data being kept confidential and secure.

The information leakage is at most $\beta/\gamma$ bits
if $|X| \le \beta < \gamma$ and $Y$ is distributed uniformly
over $[-\gamma, \gamma]$, as stated in the following theorem
and proven in Appendix~\ref{mainproof}:
\begin{theorem}
\label{infoleak}
Suppose that $X$ and $Y$ are independent scalar random variables
and $\beta$ and $\gamma$ are positive real numbers
such that $|X| \le \beta < \gamma$
and $Y$ is distributed uniformly over $[-\gamma, \gamma]$.
Then, the information leaked about $X$ from observing $X+Y$ satisfies
\begin{equation}
\label{worst}
I(X; X+Y) \le \frac{\beta}{\gamma},
\end{equation}
where $I$ denotes the mutual information between $X$ and $X+Y$,
measured in bits (not nats); the mutual information satisfies~(\ref{mutual}),
which expresses $I$ as a change in entropy.
The inequality in~(\ref{worst}) is an equality when $X$ is $\beta$ times
a Rademacher variate.
\end{theorem}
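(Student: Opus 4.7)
The plan is to express the mutual information directly via the density of $X+Y$. Because $X$ and $Y$ are independent with $Y$ uniform on $[-\gamma,\gamma]$, the convolution formula gives the density of $X+Y$ at $z$ as $g(z)/(2\gamma)$, where
\[
g(z) \,:=\, \Pr\!\bigl(X \in [z-\gamma,\, z+\gamma]\bigr) \,\in\, [0,1].
\]
Using the rightmost expression in~(\ref{mutual}) together with $\log_2(g/(2\gamma)) = \log_2 g - \log_2(2\gamma)$ yields
\[
I(X;\, X+Y) \,=\, H(X+Y) - H(Y) \,=\, \frac{1}{2\gamma}\int g(z)\,\log_2\!\frac{1}{g(z)}\, dz,
\]
while Fubini gives $\int g(z)\, dz = 2\gamma$.

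The crucial observation is to split the integration domain into a ``middle'' region and two ``tails.'' Because $|X|\le \beta < \gamma$, whenever $|z|\le \gamma-\beta$ the interval $[z-\gamma,z+\gamma]$ already contains $[-\beta,\beta]$, forcing $g(z)=1$ and making the integrand $g\log_2(1/g)$ vanish. Outside $[-(\gamma+\beta),\gamma+\beta]$, the function $g$ itself vanishes. The remaining tail region $[-(\gamma+\beta),-(\gamma-\beta)]\cup[\gamma-\beta,\gamma+\beta]$ has total length $4\beta$, and since the middle already accounts for mass $2(\gamma-\beta)$ out of the total $2\gamma$, the mass of $g$ on the tails is $2\beta$ and its average value there is exactly $\tfrac{1}{2}$.

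What remains is a single application of Jensen's inequality to the concave function $h(u) := u\log_2(1/u)$ on $[0,1]$: since $g$ is $[0,1]$-valued on the tails and averages $\tfrac{1}{2}$ there,
\[
\int_{\text{tails}} g(z)\log_2\!\frac{1}{g(z)}\, dz \,\le\, 4\beta\cdot h\!\left(\tfrac{1}{2}\right) \,=\, 2\beta,
\]
and dividing by $2\gamma$ gives $I(X;X+Y) \le \beta/\gamma$. For the equality case one checks directly that if $X=\pm\beta$ equiprobably, then $g$ equals $1$ on the middle and exactly $\tfrac{1}{2}$ on each tail, so Jensen becomes an equality and the bound is attained.

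There is no serious technical obstacle; the step that takes real insight is the middle/tail decomposition, since a naive Jensen applied across the entire support of $X+Y$ would produce only the weaker bound $\log_2(1+\beta/\gamma)$. One should also briefly verify that the differential entropies in~(\ref{mutual}) are finite, which is routine because $X+Y$ has a bounded, compactly supported density.
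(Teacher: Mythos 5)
Your proof is correct and follows essentially the same route as the paper's: both compute the density of $X+Y$ by convolving with the uniform density, observe that it is flat (equal to $1/(2\gamma)$) on the middle interval $[-(\gamma-\beta),\gamma-\beta]$ and supported on tails of total length $4\beta$ carrying probability mass $2\beta$, and bound the tail contribution to the entropy using the concavity of $u\mapsto -u\log_2 u$. The only difference is the final step --- the paper pairs the two tails via the CDF $F$ and minimizes $F\log_2 F+(1-F)\log_2(1-F)$ pointwise at $F=1/2$, whereas you apply Jensen globally over the tails using their average value $\tfrac{1}{2}$ --- and the two variants are equivalent here, yielding the same bound and the same Rademacher equality case.
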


The following theorem, proven in Appendix~\ref{simpleproof},
states that the information leakage from hiding data multiple times
is at most the sum of the information leaking from each individual hiding.
\begin{theorem}
\label{chaining}
Suppose that $X$, $Y$, and $Z$ are independent scalar random variables. Then,
\begin{equation}
\label{subadditivity}
I(X;\, X+Y,\, X+Z) \le I(X;\, X+Y) + I(X;\, X+Z),
\end{equation}
where $I$ denotes the mutual information.
\end{theorem}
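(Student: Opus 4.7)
The plan is to reduce the statement to the inequality \textit{conditioning reduces entropy} by invoking the chain rule for mutual information and the independence hypothesis on $Y$ and $Z$. Writing $A = X+Y$ and $B = X+Z$, the chain rule gives
\begin{equation*}
I(X;\, A, B) \;=\; I(X;\, A) + I(X;\, B \;|\; A),
\end{equation*}
so it suffices to prove that $I(X;\, X+Z \;|\; X+Y) \le I(X;\, X+Z)$.

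The key structural observation is that $X+Y$ and $X+Z$ are conditionally independent given $X$. Indeed, since $Y$ and $Z$ are independent of each other and of $X$, the pair $(X+Y, X+Z)$ factors as a (deterministic) function of $X$ together with the independent noises $Y$ and $Z$; formally, $(X+Y) - X - (X+Z)$ is a Markov chain. In terms of differential entropies, this yields
\begin{equation*}
H(X+Z \;|\; X,\, X+Y) \;=\; H(X+Z \;|\; X).
\end{equation*}

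Substituting this identity into the definition of conditional mutual information and comparing with the unconditional one, I would then conclude by the standard inequality that conditioning does not increase entropy, namely $H(X+Z \;|\; X+Y) \le H(X+Z)$:
\begin{equation*}
I(X;\, X+Z \;|\; X+Y)
= H(X+Z \;|\; X+Y) - H(X+Z \;|\; X)
\le H(X+Z) - H(X+Z \;|\; X)
= I(X;\, X+Z).
\end{equation*}
Combining with the chain rule displayed above finishes the proof.

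There is no serious obstacle; the only subtlety worth flagging is that the entropies involved are differential entropies (since $Y$ and $Z$ are typically continuous in the intended applications), so one should state the argument in terms of mutual information rather than individual entropies, as both the chain rule and the ``conditioning reduces entropy'' inequality remain valid for differential entropy when the relevant densities exist. The independence of $X$, $Y$, and $Z$, used to establish the Markov chain $(X+Y) - X - (X+Z)$, is the only place where the hypothesis of the theorem enters.
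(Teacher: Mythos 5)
Your proof is correct, and it is a genuinely different packaging of the argument from the one in the paper. The paper expands $I(X;\,X+Y,\,X+Z) = H(X+Y,\,X+Z) - H(X+Y,\,X+Z\;|\;X)$ directly, bounds the joint entropy by subadditivity $H(X+Y,\,X+Z) \le H(X+Y) + H(X+Z)$, and uses independence to split the conditional term as $H(X+Y,\,X+Z\;|\;X) = H(Y) + H(Z) = H(X+Y\;|\;X) + H(X+Z\;|\;X)$. You instead apply the chain rule for mutual information, $I(X;\,A,B) = I(X;\,A) + I(X;\,B\;|\;A)$, and then show $I(X;\,X+Z\;|\;X+Y) \le I(X;\,X+Z)$ via the Markov chain $(X+Y) - X - (X+Z)$ together with the fact that conditioning does not increase entropy. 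The two routes are equivalent at bottom --- subadditivity of joint entropy is the same fact as $H(B\;|\;A) \le H(B)$, and your conditional-independence step is exactly where the paper uses independence to factor $H(Y,Z)$ --- but your version isolates the reusable statement that side information which is conditionally independent of the new observation given $X$ cannot increase the conditional mutual information, which generalizes more readily (e.g., to more than two maskings by induction), whereas the paper's version is slightly more self-contained in that it never needs the conditional mutual information or the chain rule for $I$. Your closing remark about differential entropy is apt and matches the care the paper takes on the same point; the one identity worth writing out explicitly in a final version is $H(X+Z\;|\;X,\,X+Y) = H(Z\;|\;X,\,Y) = H(Z) = H(X+Z\;|\;X)$, which justifies the substitution you make in the definition of $I(X;\,X+Z\;|\;X+Y)$.
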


The procedures of Tables~\ref{Beaver} and~\ref{squaring}
can also leak information, but not much --- consider Table~\ref{squaring}:
Party~2 observes nothing about the input data $X$ other than $X-P$,
and Theorem~\ref{infoleak} bounds how much information that reveals about $X$.
Party~1 observes $X-Y$, $P-Q$, $P^2-T$, $(X-Y)-(P-Q)$, $X-P$,
and $(P^2-T) + 2(P-Q)(X-P) + (X-P)^2/2$.
The following theorem, proven in Appendix~\ref{detailed}, bounds
how much information about $X$ these observations reveal.
\begin{theorem}
\label{Beaverdetails}
Suppose that $X$, $Y$, $P$, $Q$, and $T$
are independent scalar random variables and $\gamma$ is a positive real number
such that $|X| \le 1 < 3 < \gamma$, the random variable $T$
is distributed uniformly over $[-\gamma^2, \gamma^2]$,
and $Y$, $P$, and $Q$ are distributed uniformly over $[-\gamma, \gamma]$. Then,
\begin{multline}
\label{mainsquare}
I\Bigl(X;\,
X-Y,\, P-Q,\, P^2-T,\, (X-Y)-(P-Q),\, X-P,\,
(P^2-T) + 2(P-Q)(X-P) + (X-P)^2/2\Bigr) \\
\le \frac{5}{\gamma} + \frac{1}{\gamma^2},
\end{multline}
where $I$ denotes the mutual information measured in bits,
and its arguments (aside from $X$) are the observations in Table~\ref{squaring}
under the column for ``party~1.''
\end{theorem}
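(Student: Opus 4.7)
The plan is to first collapse the six observations listed in the theorem to the four truly informative ones: writing $o_1 = X-Y$, $o_2 = P-Q$, $o_3 = P^2-T$, and $o_5 = X-P$, observation~4 is $o_1 - o_2$, and observation~6 is $o_3 + 2 o_2 o_5 + o_5^2/2$, so both are deterministic functions of $(o_1, o_2, o_3, o_5)$ and contribute no additional mutual information. Hence it suffices to bound $I(X;\, o_1, o_2, o_3, o_5)$.

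Next I would decompose via the chain rule for mutual information,
$$I(X;\, o_1, o_2, o_3, o_5) = I(X; o_1) + I(X; o_5 \mid o_1) + I(X; o_2 \mid o_1, o_5) + I(X; o_3 \mid o_1, o_2, o_5),$$
and attack each summand with Theorem~\ref{infoleak}. The first three summands each contribute at most $1/\gamma$: the bound on $I(X; o_1)$ is immediate from Theorem~\ref{infoleak} with $\beta = 1$; the bound on $I(X; o_5 \mid o_1)$ follows by applying Theorem~\ref{infoleak} pointwise at each value of $o_1$, since $P$ is independent of $(X, Y)$ and hence of $o_1$, so $P$ remains uniform on $[-\gamma, \gamma]$ under the conditioning; and for $I(X; o_2 \mid o_1, o_5)$ one first rewrites $o_2 = (X - Q) - o_5$, so the invertible change of variables $(o_2, o_5) \leftrightarrow (X - Q, o_5)$ turns the term into $I(X;\, X - Q \mid o_1, o_5)$, to which conditional Theorem~\ref{infoleak} applies because $Q$ is independent of $(X, Y, P)$.

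The crux is the fourth summand, $I(X; o_3 \mid o_1, o_2, o_5)$. A naive application of Theorem~\ref{infoleak} directly to $P^2 - T$ is vacuous because $|P^2|$ can saturate $\gamma^2$. Instead, substitute $P = X - o_5$ to obtain $o_3 = X^2 - 2 o_5 X + o_5^2 - T$, and, since $o_5^2$ is a constant once we condition on $o_5$, reduce the observation to $W - T$ with $W := X^2 - 2 o_5 X$. The key algebraic rewrite $W = X(2P - X)$ then yields $|W| \le |X|\,(2|P| + |X|) \le 2\gamma + 1 < \gamma^2$, where the final inequality uses $|X| \le 1 < 3 < \gamma$. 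Because $W$ is a deterministic function of $(X, o_5)$ while $T$ is independent of $(X, o_1, o_2, o_5)$, expanding $I(X, W;\, W - T \mid o_1, o_2, o_5)$ two ways by the chain rule collapses to $I(X;\, W - T \mid o_1, o_2, o_5) = I(W;\, W - T \mid o_1, o_2, o_5)$; applying Theorem~\ref{infoleak} conditionally, noting that $T$ remains uniform on $[-\gamma^2, \gamma^2]$ under the conditioning, gives $(2\gamma + 1)/\gamma^2 = 2/\gamma + 1/\gamma^2$.

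Summing the four contributions yields $1/\gamma + 1/\gamma + 1/\gamma + 2/\gamma + 1/\gamma^2 = 5/\gamma + 1/\gamma^2$, matching~(\ref{mainsquare}). The hard part is the last term: one must choose the right reparametrization---subtracting off $o_5^2$ and re-expressing $X^2 - 2 o_5 X$ as $X(2P - X)$ rather than as $(X - o_5)^2$---so that the effective masked quantity has magnitude of order $\gamma$ rather than order $\gamma^2$. Without this algebraic trick, the bound on $I(X; o_3 \mid o_1, o_2, o_5)$ would be of order $1$, destroying the theorem.
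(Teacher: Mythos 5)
Your proof is correct, and it turns on exactly the same crux as the paper's: conditioned on $X-P$, the observation $P^2-T$ is (up to a known additive constant) a uniform-noise masking of $X^2-2XP = X(2P-X)$, whose magnitude is at most $1+2\gamma$ rather than order $\gamma^2$, so the $T$-masking leaks only $2/\gamma+1/\gamma^2$ bits instead of $O(1)$. Where you differ is in the skeleton of the decomposition. The paper first converts $P-Q$ into $X-Q$ by a bijection and then invokes the subadditivity of Theorem~\ref{chaining} (in its three-way form, valid because $Y$, $Q$, and the pair $(P,T)$ are mutually independent) to split the leakage into the three \emph{unconditional} terms $I(X;\,X-Y)+I(X;\,X-Q)+I(X;\,X-P,\,P^2-T)$, applying a two-term chain rule only inside the last piece; every invocation of Theorem~\ref{infoleak} and of the entropy bound~(\ref{maxent}) is then unconditional. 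You instead run the full chain rule across all four observations and bound each conditional mutual information by a conditional application of Theorem~\ref{infoleak}. That is legitimate, but it is the one place your route carries an extra burden: each such application requires checking that the relevant noise variable remains uniform and independent of $X$ under the conditioning and that $|X|\le 1$ persists conditionally --- you do note these points, and they hold because each noise variable is independent of everything previously conditioned on, but they are precisely the hypotheses the paper's subadditivity route avoids having to verify. Both routes produce the identical tally $1/\gamma+1/\gamma+1/\gamma+2/\gamma+1/\gamma^2 = 5/\gamma+1/\gamma^2$, and you correctly flag that bounding $|X^2-2o_5X|$ via $X(2P-X)$ (rather than via $|o_5|\le 1+\gamma$, which would give $3/\gamma^2$ in place of $1/\gamma^2$) is needed to match~(\ref{mainsquare}) exactly.
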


The following theorem states the classical data-processing inequality:
\begin{theorem}
\label{dataproc}
Suppose that random matrices $X$ and $Z$ are conditionally independent
given a random matrix $Y$. Then,
\begin{equation}
I(X; Z) \le I(X; Y)
\end{equation}
and
\begin{equation}
I(X; Z) \le I(Y; Z),
\end{equation}
where $I$ denotes the mutual information.
\end{theorem}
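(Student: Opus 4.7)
The plan is to deduce both inequalities from the chain rule for mutual information together with the hypothesis of conditional independence and the non-negativity of (conditional) mutual information. All of these ingredients are classical and appear, for example, in the reference to Cover and Thomas cited earlier in the paper, so the argument is short.

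First I would expand $I(X;\, Y,Z)$ in the two standard ways using the chain rule:
\begin{equation*}
I(X;\, Y,Z) \;=\; I(X;\, Y) + I(X;\, Z \mid Y) \;=\; I(X;\, Z) + I(X;\, Y \mid Z).
\end{equation*}
The hypothesis that $X$ and $Z$ are conditionally independent given $Y$ says exactly that $I(X;\, Z \mid Y) = 0$. Dropping this term and using the non-negativity of the conditional mutual information $I(X;\, Y \mid Z) \ge 0$ on the right yields
\begin{equation*}
I(X;\, Y) \;=\; I(X;\, Z) + I(X;\, Y \mid Z) \;\ge\; I(X;\, Z),
\end{equation*}
which is the first claimed inequality.

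For the second inequality I would do the symmetric computation, expanding $I(X,Y;\, Z)$ in two ways:
\begin{equation*}
I(X,Y;\, Z) \;=\; I(Y;\, Z) + I(X;\, Z \mid Y) \;=\; I(X;\, Z) + I(Y;\, Z \mid X).
\end{equation*}
Again $I(X;\, Z \mid Y) = 0$ by the conditional-independence hypothesis, and $I(Y;\, Z \mid X) \ge 0$ because conditional mutual information is non-negative, giving $I(Y;\, Z) \ge I(X;\, Z)$.

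There is no real obstacle; the only thing one must be mildly careful about is that the statement is phrased for random matrices rather than scalar random variables, so the chain-rule and non-negativity facts must be invoked in that generality, but these properties of mutual information hold verbatim for random elements taking values in any measurable space (in particular for random matrices) provided the relevant joint distributions have well-defined entropies or densities, which is implicit in the assumption that $I(X;Y)$, $I(Y;Z)$, and $I(X;Z)$ are meaningful quantities to begin with.
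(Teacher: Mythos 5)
Your argument is correct: both inequalities follow from the two chain-rule expansions of $I(X;\,Y,Z)$ and $I(X,Y;\,Z)$, the hypothesis $I(X;\,Z\mid Y)=0$, and the non-negativity of conditional mutual information. The paper itself offers no proof of this theorem --- it states it as the classical data-processing inequality and implicitly defers to the cited reference of Cover and Thomas --- and your proof is exactly the standard textbook argument found there, so there is nothing to reconcile.
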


The following is a corollary of Theorem~\ref{dataproc}:
\begin{corollary}
Suppose that $X$ and $Y$ are random matrices
and $f$ is a deterministic function. Then,
\begin{equation}
\label{complicated}
I(X; f(Y)) \le I(X; Y),
\end{equation}
where $I$ denotes the mutual information.
\end{corollary}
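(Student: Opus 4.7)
The plan is to reduce the corollary directly to the data-processing inequality (Theorem~\ref{dataproc}) by taking $Z := f(Y)$ and verifying that the hypothesis of conditional independence holds for this choice. This is the natural route since Theorem~\ref{dataproc} is precisely the abstract form of the bound we want; all the corollary really adds is the observation that a deterministic function gives a particularly trivial instance of a Markov chain $X \to Y \to Z$.

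The first step is to fix $Z := f(Y)$ and check that $X$ and $Z$ are conditionally independent given $Y$. Because $f$ is deterministic, conditioning on $Y = y$ fixes the value $Z = f(y)$ with probability one. Thus for any (measurable) sets $A$ and $B$, the conditional probability $\Pr(X \in A,\ Z \in B \mid Y = y)$ equals $\Pr(X \in A \mid Y = y) \cdot \mathbf{1}_{f(y) \in B}$, which factors as $\Pr(X \in A \mid Y = y) \cdot \Pr(Z \in B \mid Y = y)$. This is exactly the definition of $X$ and $Z$ being conditionally independent given $Y$.

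The second step is to invoke the first inequality from Theorem~\ref{dataproc}, which, under this conditional independence, yields $I(X;\, Z) \le I(X;\, Y)$. Substituting back $Z = f(Y)$ gives the desired inequality~(\ref{complicated}).

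I do not anticipate any real obstacle: the only subtle point is the verification of conditional independence, and that is immediate from $f$ being deterministic (so $f(Y)$ carries no randomness beyond what is already present in $Y$). No additional hypotheses on the joint law of $(X, Y)$ or on the regularity of $f$ beyond measurability are needed, and no further information-theoretic machinery is required.
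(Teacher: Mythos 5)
Your proposal is correct and follows exactly the route the paper intends: the paper simply states the result as a corollary of Theorem~\ref{dataproc}, and your argument supplies the one detail left implicit --- that $Z := f(Y)$ is conditionally independent of $X$ given $Y$ because $f$ is deterministic --- before applying the data-processing inequality. No differences in approach; your write-up just makes the paper's omitted verification explicit.
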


Combining~(\ref{mutual}) and~(\ref{complicated})
shows that even very complicated manipulations such as the iterations
in Section~\ref{polynomials} below cause no further information leakage,
despite changing the added noise in some highly nonlinear fashion:
(\ref{complicated}) guarantees that no information leaks
beyond the individual maskings obeying~(\ref{worst})--(\ref{mainsquare}),
so long as the manipulations are deterministic algorithms
(or randomized algorithms with randomization independent
of the data being masked).

\section{Polynomial approximations}
\label{polynomials}

Polynomials can approximate many functions of interest in machine learning,
allowing the accurate approximation of those functions using only additions
and multiplications. Section~\ref{smpc} above discusses schemes that
multiple parties can use to perform additions and multiplications securely.
The present section describes polynomial approximations useful in tandem
with the schemes of Section~\ref{smpc}.
Subsection~\ref{Newton} leverages the method of Newton and Raphson.
Subsection~\ref{Chebyshev} uses Chebyshev series.
Subsection~\ref{softmax} utilizes Pad\'e approximation,
in the method of scaling and squaring.
The method of Newton and Raphson tends to be the most efficient,
while Chebyshev series apply to a much broader class of functions.
The method of scaling and squaring is for exponentiation.
Table~\ref{polytable} lists the functions that each subsection treats.

\begin{table}
\begin{center}
\begin{tabular}{lll}
$f(x)$ & Name & Subsection \\\hline
$\sgn(x)$ & sign or signum & \ref{Newton} \\
$|x| = x \sgn(x)$ & absolute value & \ref{Newton} \\
$1/x$ & reciprocal & \ref{Newton} \\
$1/\sqrt{x}$ & raise to $-1/2$ power & \ref{Newton} \\
$x^{-1/8}$ & raise to $-1/8$ power & \ref{Newton} \\
$\operatorname{ReLU}(x) = \max(x, 0)$ & rectified linear unit & \ref{Newton} \\
$\tanh(x)$ & hyperbolic tangent & \ref{Chebyshev} \\
$1/(1 + \exp(-x))$ & logistic & \ref{Chebyshev} \\
$\int_{-\infty}^x \exp(-y^2/2) \, dy \bigm/ \sqrt{2\pi}$
& CDF of standard normal & \ref{Chebyshev} \\
$\exp(x)$ & exponential & \ref{softmax}
\end{tabular}
\end{center}
\caption{Subsections providing polynomial approximations to various functions}
\label{polytable}
\end{table}

\subsection{Newton iterations}
\label{Newton}

Various iterations derived from the Newton method for finding zeros
of functions allow the computation of functions such as $\sgn(x)$,
$1/x$, and $1/\sqrt{x}$ using only additions and multiplications
(not requiring any divisions or square roots); in this subsection,
$x$ denotes a real number.

According to~\cite{kenney-laub},
the Newton-Schulz iterations for computing $\sgn(x)$ are
\begin{equation}
\label{sgn}
y_{k+1} = y_k (3 - y_k^2) / 2,
\end{equation}
with $y_0 = x / \gamma$, where $|x| \le \gamma$
(and the desired loss of accuracy relative to the machine precision
is less than a factor of $\gamma$).

According to~\cite{kenney-laub},
the Schulz (or Newton) iterations for computing $1/x$ when $x>0$ are
\begin{equation}
\label{inverse}
y_{k+1} = y_k (2 - x y_k),
\end{equation}
with $y_0 = 1$. Rescaling $x$ (and then adjusting the resulting reciprocal)
is important to align with the domain of convergence and high accuracy
illustrated in Figure~\ref{reciprocal}; and similar observations pertain
to the rest of the iterations of the present subsection.
In Subsection~\ref{softmax} below, $x$ can range
from 1 to the number of terms in the softmax (so requires scaling
by the reciprocal of the number of terms in the softmax).

According to~\cite{guo-higham},
the Newton iterations for computing $1/\sqrt{x}$ when $x>0$ are
\begin{equation}
\label{invs}
y_{k+1} = y_k (3 - x y_k^2) / 2,
\end{equation}
with $y_0 = 1$; similarly, the Newton iterations for computing $x^{-1/8}$
when $x>0$ are
\begin{equation}
\label{inv8}
y_{k+1} = y_k (9 - x y_k^8) / 8,
\end{equation}
with $y_0 = 1$.

Figures~\ref{reciprocal}--\ref{absval} illustrate the errors
obtained from~(\ref{sgn})--(\ref{inv8});
note that the scale of the vertical axes
in Figures~\ref{reciprocal}--\ref{inv8root} involve 1e--16.
In the figures, the tilde denotes the approximation computed
via the Newton iterations~(\ref{sgn})--(\ref{inv8});
for example, $\widetilde{1/x}$ approximates $1/x$.

A common operation in the deep learning
of~\cite{lecun-chopra-hadsell-ranzato-huang} and others
is the rectified linear unit
\begin{equation}
\label{relu}
\operatorname{ReLU}(x) = \max(x, 0) = \frac{x (1 + \sgn(x))}{2},
\end{equation}
easily obtained from~(\ref{sgn}).

\begin{figure}
\begin{centering}

\parbox{.81\textwidth}{\includegraphics[width=.8\textwidth]{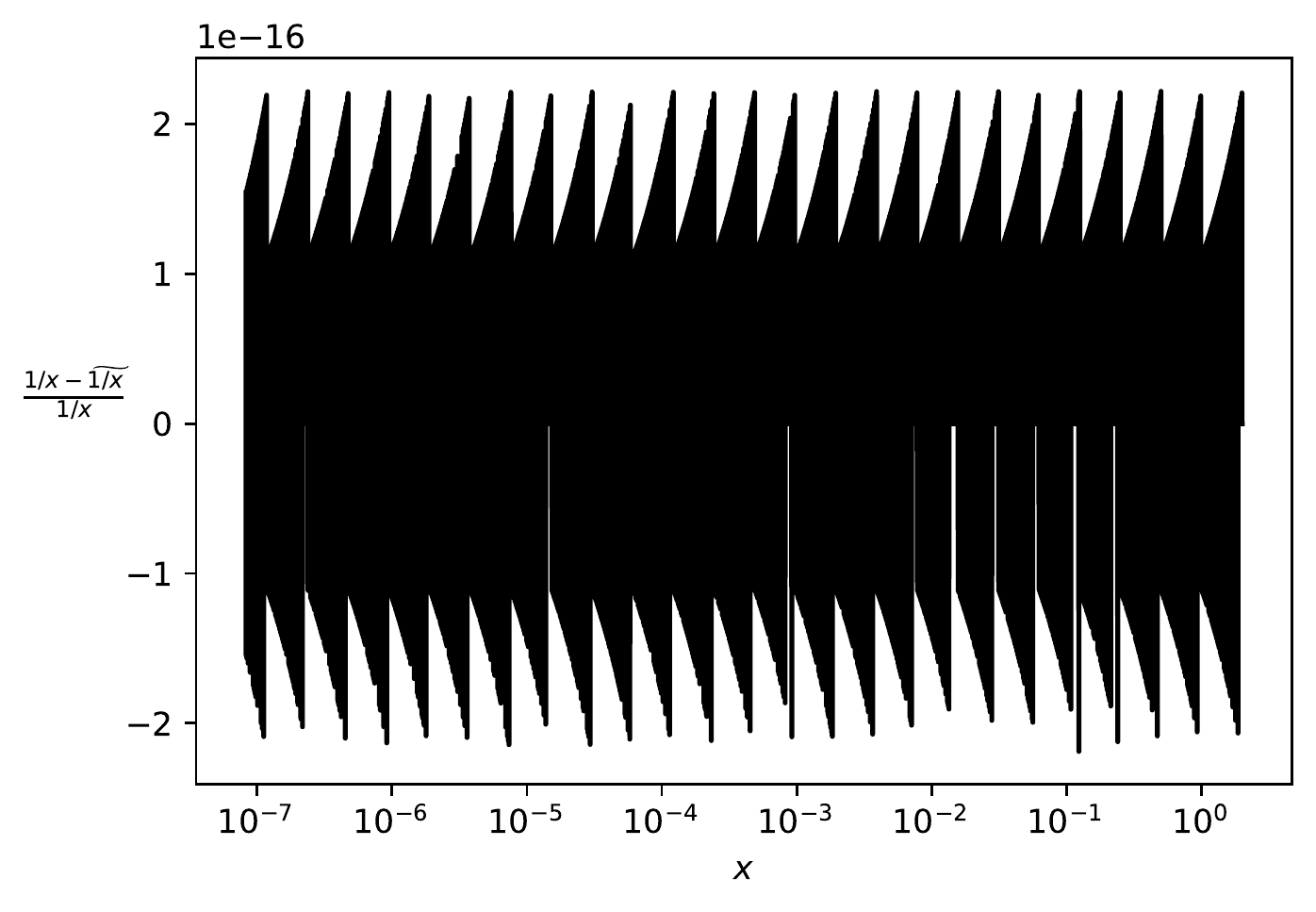}}

\end{centering}
\caption{Relative error in computation of $1/x$ with 30 iterations
of~(\ref{inverse})}
\label{reciprocal}
\end{figure}

\begin{figure}
\begin{centering}

\parbox{.81\textwidth}{\includegraphics[width=.8\textwidth]{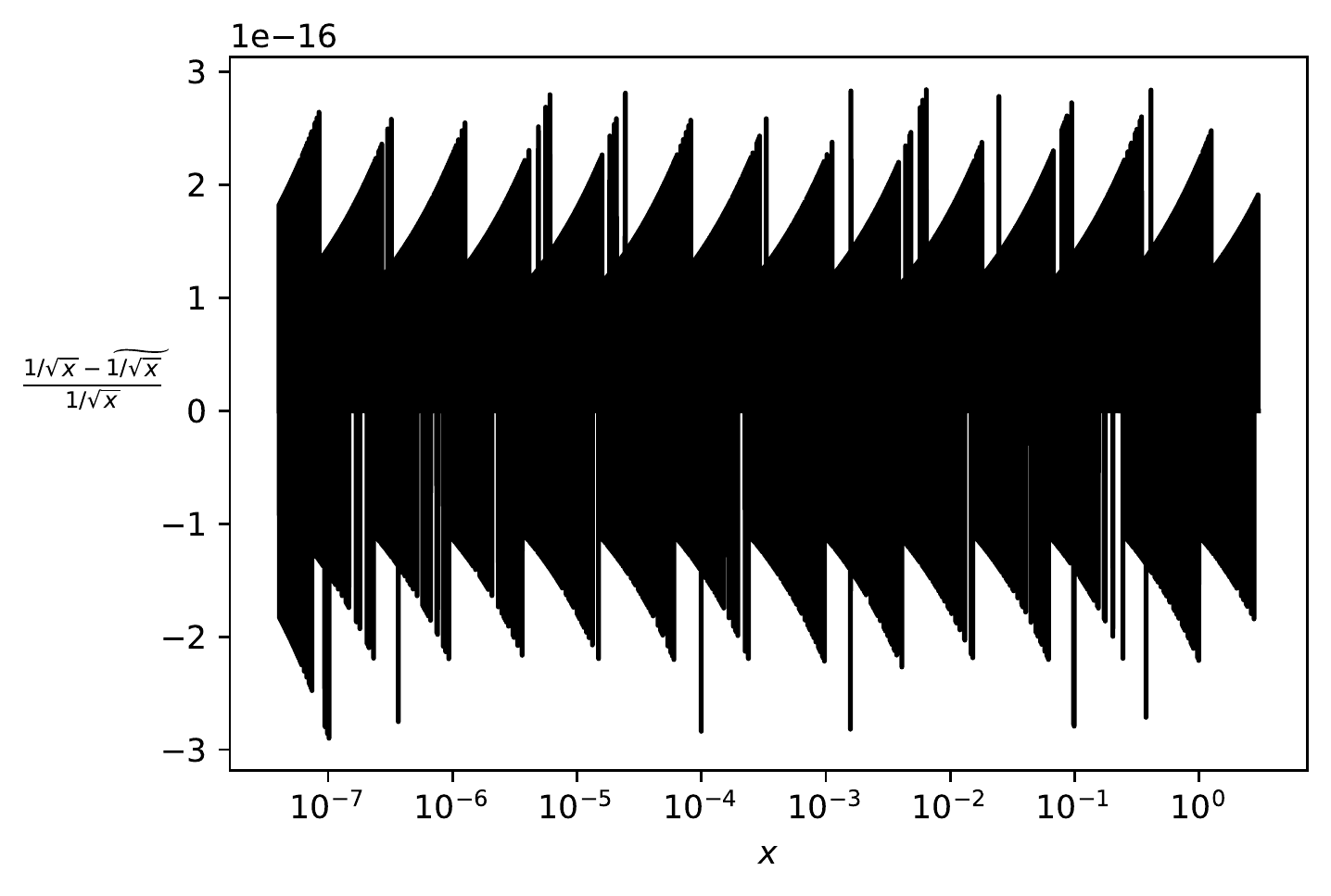}}

\end{centering}
\caption{Relative error in computation of $1/\sqrt{x}$ with 26 iterations
of~(\ref{invs})}
\label{invsqrt}
\end{figure}

\begin{figure}
\begin{centering}

\parbox{.81\textwidth}{\includegraphics[width=.8\textwidth]{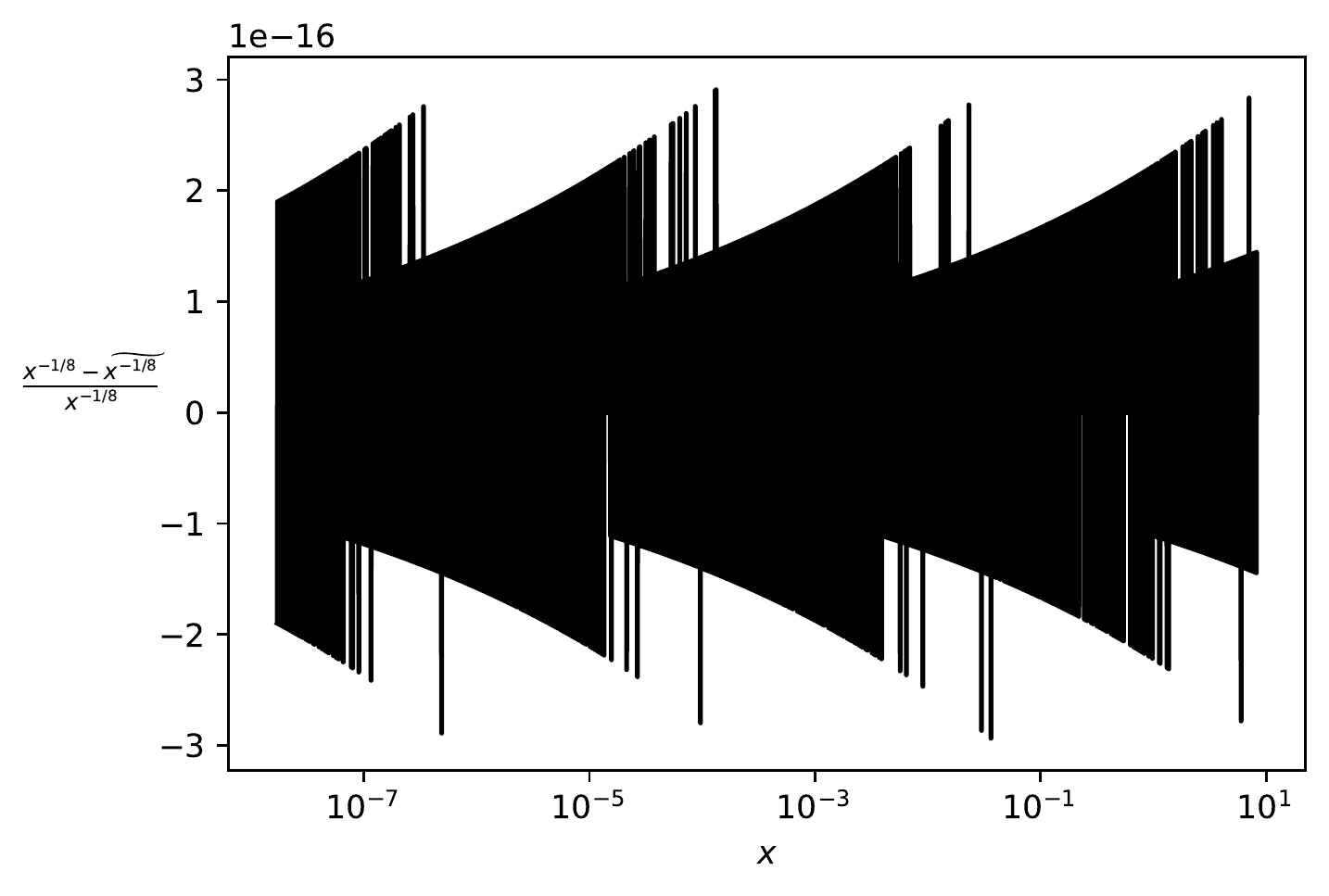}}

\end{centering}
\caption{Relative error in computation of $x^{-1/8}$ with 24 iterations
of~(\ref{inv8})}
\label{inv8root}
\end{figure}

\begin{figure}
\begin{centering}

\parbox{.81\textwidth}{\includegraphics[width=.8\textwidth]{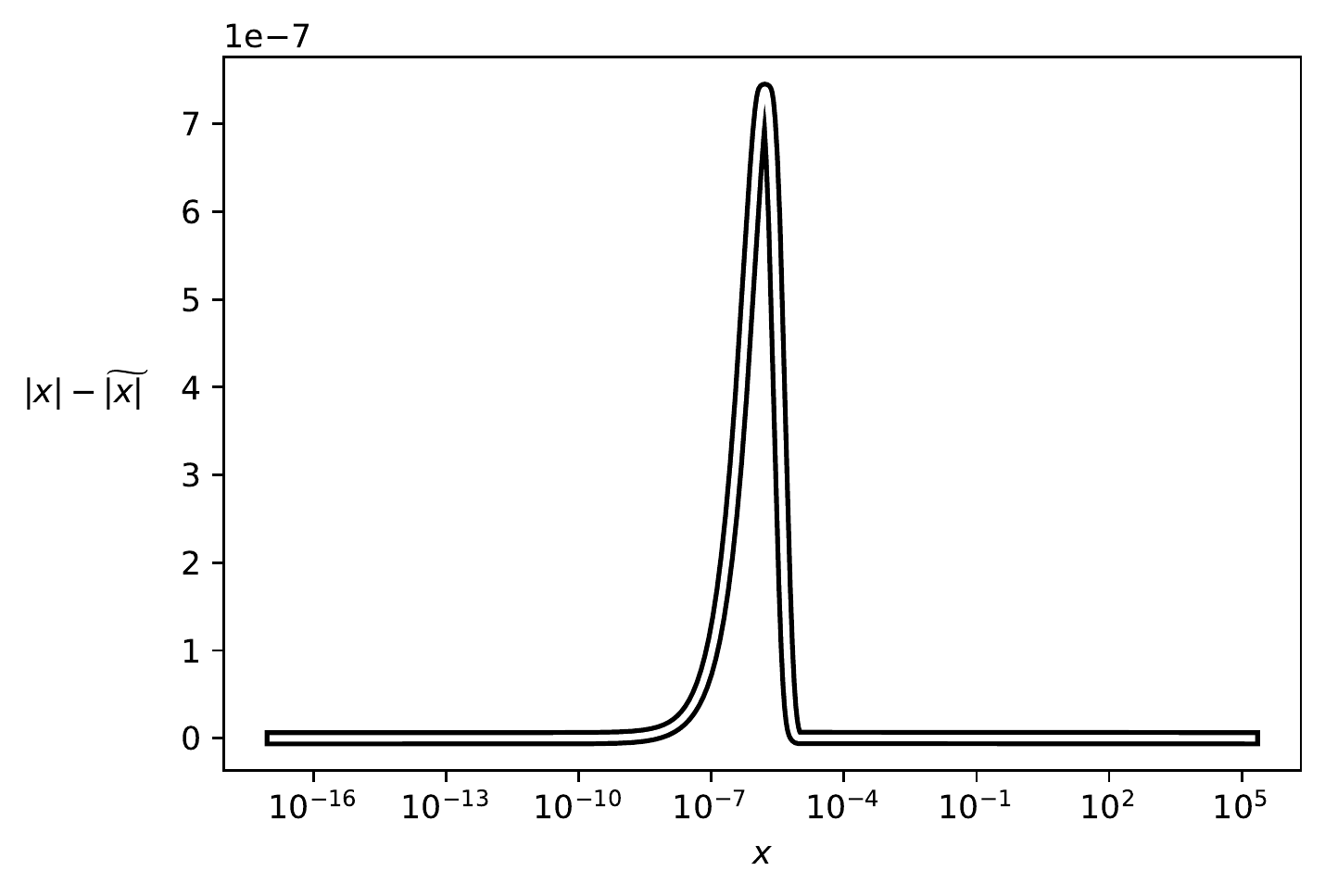}}

\end{centering}
\caption{Absolute error in computation of $|x| = x \sgn(x)$ with 60 iterations
of~(\ref{sgn}); the figure superimposes a white curve over a black curve,
where the white curve uses $y_0 = -x / \gamma$ to start~(\ref{sgn})
while the black curve uses~$y_0 = x / \gamma$, both with $\gamma = 10^5$}
\label{absval}
\end{figure}

\subsection{Chebyshev series}
\label{Chebyshev}

Chebyshev series provide efficient approximations to smooth functions
using only additions and multiplications.
The approximations are especially efficient for odd functions, such as
\begin{equation}
\label{tanh}
f(x) = \tanh(x) = \frac{\exp(x) - \exp(-x)}{\exp(x) + \exp(-x)},
\end{equation}
\begin{equation}
\label{logit}
f(x) = \frac{1}{1 + \exp(-x)} - \frac{1}{2} = \tanh(x/2) / 2,
\end{equation}
and
\begin{equation}
\label{probit}
f(x) = \frac{1}{\sqrt{2\pi}} \int_{-\infty}^x \exp(-y^2/2) \, dy - \frac{1}{2};
\end{equation}
in this subsection, $x$ denotes a real number.
The function in~(\ref{tanh}) is the hyperbolic tangent.
The function in~(\ref{logit}) is a constant
plus the standard logistic function, familiar from logistic regression.
The function in~(\ref{probit}) is a constant
plus the cumulative distribution function for the standard normal distribution,
familiar from probit regression.
Performing logistic regression or probit regression by maximizing
the log-likelihood relies on the evaluation of~(\ref{logit}) or~(\ref{probit}),
respectively, at least when using a gradient-based optimizer,
the method of Newton and Raphson, or the method of scoring.
Details about these functions and regressions are available, for example,
in the monograph of~\cite{mccullagh-nelder}.

Appendix~\ref{chebappendix} reviews algorithms for computing approximations
via Chebyshev series of odd functions,
with accuracy determined via two parameters, $n$ and $z$,
where the degree of the (odd) approximating polynomial is $2n-1$,
and $[-z, z]$ is the interval over which the approximation is valid.
Setting $n = 50$ and $z = 10$ yields 7-digit accuracy
for the approximation of~(\ref{tanh});
setting $n = 22$ and $z = 5$ yields 4-digit accuracy
for the approximation of~(\ref{logit}); and
setting $n = 34$ and $z = 10$ yields 5-digit accuracy
for the approximation of~(\ref{probit}).
In Section~\ref{numex} below, we err on the side of caution,
defaulting to $n = 60$ and $z = 20$ for~(\ref{tanh}) and~(\ref{logit})
and to $n = 50$ and $z = 20$ for~(\ref{probit}),
while also discussing the results from other choices.

\subsection{Softmax}
\label{softmax}

As reviewed, for example, by~\cite{lecun-chopra-hadsell-ranzato-huang},
a common operation in machine learning is the so-called ``softmax''
transforming $n$ non-positive real numbers $x_1$, $x_2$, \dots, $x_n$
into the $n$ positive real numbers
$\exp(x_1)/Z$, $\exp(x_2)/Z$, \dots, $\exp(x_n)/Z$,
where $Z = Z(x_1, x_2, \dots, x_n) = \sum_{k=1}^n \exp(x_k)$.
These are the probabilities
at unit temperature in the Gibbs distribution associated
with energies $-x_1$, $-x_2$, \dots, $-x_n$,
where $Z$ is the partition function.
Once we have computed the exponentials, summation yields $Z$ directly;
the secure multiparty computations of Section~\ref{smpc}
support such summation. Division by $Z$ is available
via the iterations in~(\ref{inverse}) of Subsection~\ref{Newton}.

Thus, given real numbers $x$ and $\beta$ such that $-\beta \le x \le 0$,
and a real number $\epsilon$ such that $0 < \epsilon < 1$,
we would like to calculate $\exp(x)$ to precision $\epsilon$.
We use the method of scaling and squaring, as reviewed, for example,
by~\cite{higham}. If we let $n$ be the least integer
that is at least $\log_2(2\beta^2/\epsilon)$,
then squaring $\exp(x/2^n)$ yields $\exp(x/2^{n-1})$,
squaring $\exp(x/2^{n-1})$ yields $\exp(x/2^{n-2})$, and so on,
so that $n$ successive squarings will yield $\exp(x)$;
further, $1 + x/2^n$ approximates $\exp(x/2^n)$:
\begin{equation}
\left| \exp(x/2^n) - 1 - x/2^n\right|
= \left| \sum_{k=2}^{\infty} (x/2^n)^k/k! \right|
= (x/2^n)^2 \left| \sum_{k=0}^{\infty} (x/2^n)^k/(k+2)! \right|
\le (x/2^n)^2 \exp(x/2^n),
\end{equation}
that is,
\begin{equation}
1 + x/2^n = (1 + \delta) \exp(x/2^n), \quad |\delta| \le (x/2^n)^2,
\end{equation}
while
\begin{equation}
\left|(1 + \delta)^{2^n} - 1\right|
\le \left(1 + (x/2^n)^2\right)^{2^n} - 1
\le \left(\exp((x/2^n)^2)\right)^{2^n} - 1
= \exp(x^2/2^n) - 1 \le 2x^2/2^n \le \epsilon,
\end{equation}
so $n$ successive squarings of $1 + x/2^n$ yields $\exp(x)$
to relative accuracy $\epsilon$ (or better). 
We use $n = 20$ successive squarings in all numerical experiments
of Section~\ref{numex} below.

Given a real number $\gamma > 3\beta$, less than $6n\beta/\gamma$ bits can leak
from computing the approximation to $\exp(x)$
if we add to $1 + x/2^n$ a random variable distributed uniformly
over $[-\gamma/2^n, \gamma/2^n]$, and double the width of the added noise
upon each of the $n$ squarings, in accord with Theorems~\ref{infoleak},
\ref{chaining}, and~\ref{Beaverdetails} of Section~\ref{leakage}.

Clearly, we can enforce that a real number $x$ be non-positive by applying
$-\operatorname{ReLU}(-x)$ from~(\ref{relu}).
Computing the softmax of real numbers that may not necessarily be non-positive
is also possible, even without risk of leaking any information beyond
the case for non-positive numbers.
Indeed, we can replace each real number $x_j$
with $x_j - \sum_{k=1}^n \operatorname{ReLU}(x_k)$,
without altering the probability distribution produced by the softmax;
we perform such replacement in our implementation
of multinomial logistic regression.
When implementing a softmax for multinomial logistic regression,
we include a negative offset in the bias for the input to the softmax.
That is, we adjust the bias to be a constant amount less,
constant over all classes in the classification.
Subtracting such a positive constant $C$ tends
to make $x_1 - C$, $x_2 - C$, \dots, $x_n - C$ negative
even before replacing each real number $x_j - C$
with $x_j - C - \sum_{k=1}^n \operatorname{ReLU}(x_k - C)$.
Subtracting a constant $C$ reduces the sum
$\sum_{k=1}^n \operatorname{ReLU}(x_k - C)$; without subtracting the constant,
the sum $\sum_{k=1}^n \operatorname{ReLU}(x_k)$
can adversely impact accuracy if the sum becomes too large.
Subtracting the constant $C$ reduces accuracy by a factor of up to $\exp(C)$;
we set $C = 5$ (so $\exp(C) \approx 148$ --- a tad more than two digits)
for our numerical experiments reported in the following section,
Section~\ref{numex}.

\section{Numerical examples}
\label{numex}

Via several numerical experiments,
this section illustrates the performance of the scheme proposed above.
All examples reported in the present section use two parties
for the private computations. Subsection~\ref{several} outlines
an extension to several parties.
The terminology ``in private'' refers to computations fully encrypted
via the scheme introduced above,
while ``in public'' refers to computations in plaintext.
Subsection~\ref{synthetic} validates the scheme on examples
for which the correct answer is known by construction.
Subsection~\ref{realdata} applies the scheme to classical data sets.

All examples use minibatched stochastic gradient descent to maximize
the log-likelihood under the corresponding generalized linear model,
at the constant learning rates specified below
(except where noted for probit regression).
Appendix~\ref{sgdrev} briefly reviews stochastic gradient descent
with minibatches and weight decay;
Appendix~\ref{glmrev} briefly reviews generalized linear models.

In all cases, we learn (that is, fit) not only the vector of weights
to which the design matrix gets applied, but also a constant offset
known as the ``bias'' in the literature on stochastic gradient descent.
Thus, the linear function of the weight (fitted parameter) vector $w$
in the generalized linear model is actually the affine transform $Aw + c$,
where $A$ is the design matrix and $c$ is the bias vector whose entries
are all the same constant offset, learned or fitted together with $w$
during the iterations of stochastic gradient descent (whereas $A$ stays fixed
during the iterations).
In multinomial logistic regression, the entries of the bias vector $c$
are constant for each class (constant over all covariates and data samples),
but the constant may be different for different classes.

In the subsections below, we view linear least-squares regression
as a generalized linear model with the link function being the identity;
equivalently, we take the parametric family defining the statistical model
to be an affine transform of the vector of weights (parameters) plus
independent and identically distributed normal random variables.
The log-likelihood of a such a model summed over all samples
in the design matrix $A$ is simply a constant minus $\|Aw+c-b\|_2^2/2$,
where $\| \cdot \|_2$ denotes the Euclidean norm,
$w$ is the vector of weights (parameters),
$Aw+c$ is the affine transform defined by the design matrix $A$
and bias vector $c$, and $b$ is the vector of targets.
For simplicity, when we report the ``negative log-likelihood'' or ``loss''
in plots, averaged over the $m$ samples in the design matrix $A$,
we report $\|Aw+c-b\|_2^2 / m$, ignoring the constant and factor of 2.

The implementation of encrypted computations builds
on CrypTen of~\cite{crypten}, which in turn builds on PyTorch.
The implementation uses only IEEE standard double-precision arithmetic.
All experiments ran on one of Facebook's computer clusters for research,
which enables rapid communication between the multiple parties.
In actual deployments, communications between the multiple parties
are likely to have high latency, dramatically impacting the speed
of the multiparty computations. The speed of such communications
would vary significantly between different applications and arrangements,
likely requiring separate analyses and characterizations
of computational efficiency for different deployments.
Yet, while timings are fairly unique to the particular
computational environment, the accuracies we report below
should be fully representative for most applications.

\subsection{Validations on synthetic data}
\label{synthetic}

For the synthetic examples discussed in the following sub-subsections,
we set $m = 64$ and $n = 8$ for the numbers of rows and columns
in the design matrices being constructed.
Figure~\ref{synthwidths} displays the discrepancy of the computed results
from the ideal solution (the ideal is known a-priori to two-digit accuracy
by construction in these synthetic examples)
as a function of the maximum value $\gamma$ of the random variable
distributed uniformly over $[-\gamma, \gamma]$.
In accordance with the analysis in Subsection~\ref{two-party} above,
the figure reports that $\gamma = 10^5$ works well,
yielding the roughly two-digit agreement that would be optimal
for the synthetic data sets constructed in the following sub-subsections,
so we set $\gamma = 10^5$ for the remainder of the paper.
The logistic and probit regressions reported below both rely
on the Chebyshev approximations reviewed in Subsection~\ref{Chebyshev} above,
with the approximations being valid over the interval $[-20, 20]$
(so $z = 20$ in the notation of Subsection~\ref{Chebyshev}).
Figure~\ref{synthterms} indicates that the degree 40 approximations suffice
for optimal accuracy, while even degrees 6--12 produce reasonably accurate
results (accurate enough for most applications in machine learning
for prediction, in which only residuals or matching targets matters).
For the remainder of the paper, we use 120 terms in the Chebyshev approximation
of the logistic function (or $\tanh$),
and 100 terms in the Chebyshev approximation of the inverse probit
(or the cumulative distribution function of a standard normal variate).
For both Figures~\ref{synthwidths} and~\ref{synthterms},
we trained for 10,000 iterations with 8 samples in each iteration's minibatch,
thus sweeping through a random permutation of the full synthetic data set
1,250 times (each sweep is known as an ``epoch'').
The following sub-subsections detail the construction
of our synthetic data sets.

\begin{figure}
\begin{centering}

\parbox{.71\textwidth}{\includegraphics[width=.7\textwidth]{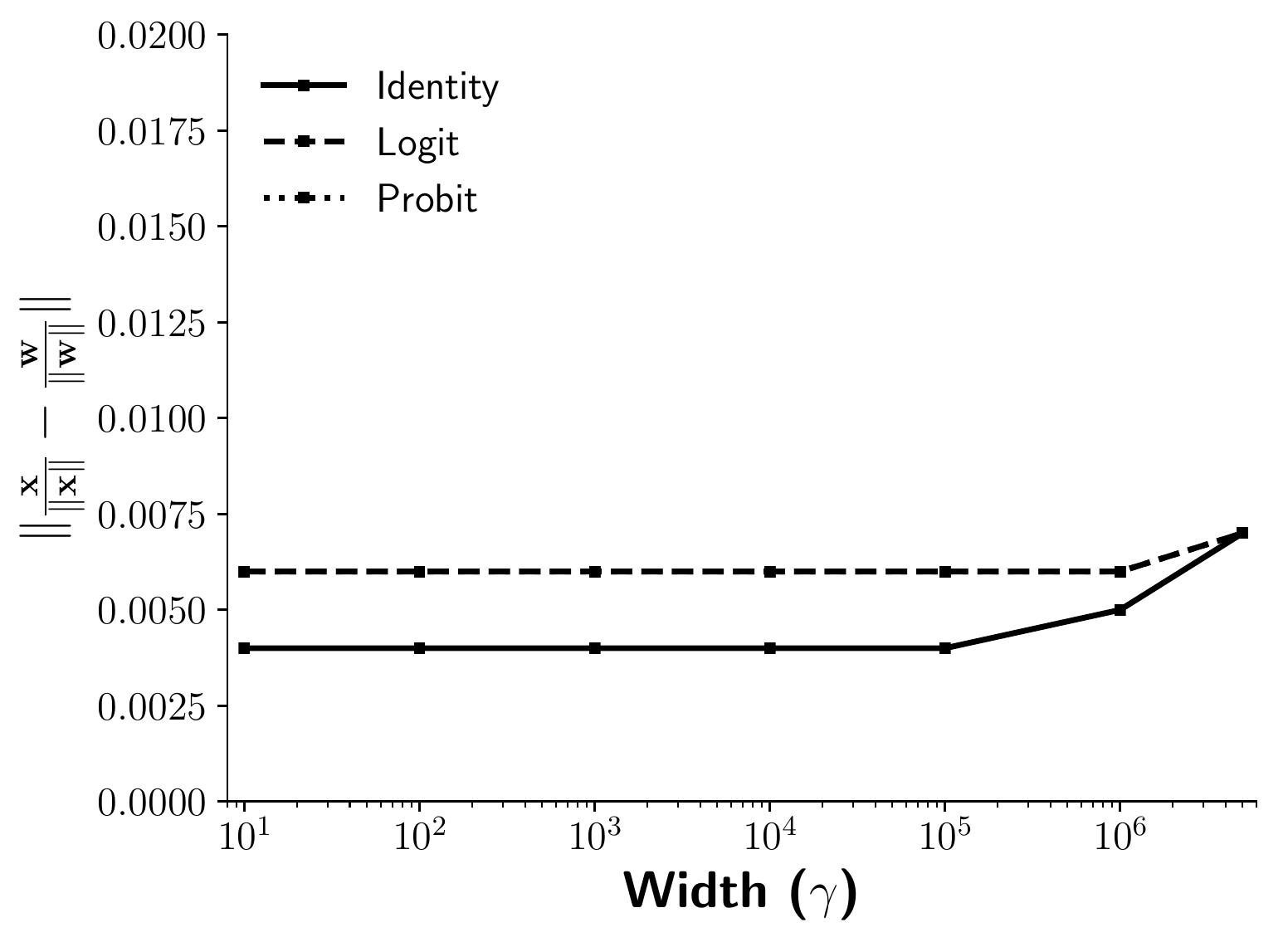}}

\end{centering}
\caption{Euclidean norm of the difference
between the ideal normalized weight vector $w/\|w\|_2$
and its computed approximation $x/\|x\|_2$ as a function of the width $\gamma$
of the uniform noise on $[-\gamma, \gamma]$ added to the shares of data
(the lines for the logit and probit links overlap)}
\label{synthwidths}
\end{figure}

\begin{figure}
\begin{centering}

\parbox{.71\textwidth}{\includegraphics[width=.7\textwidth]{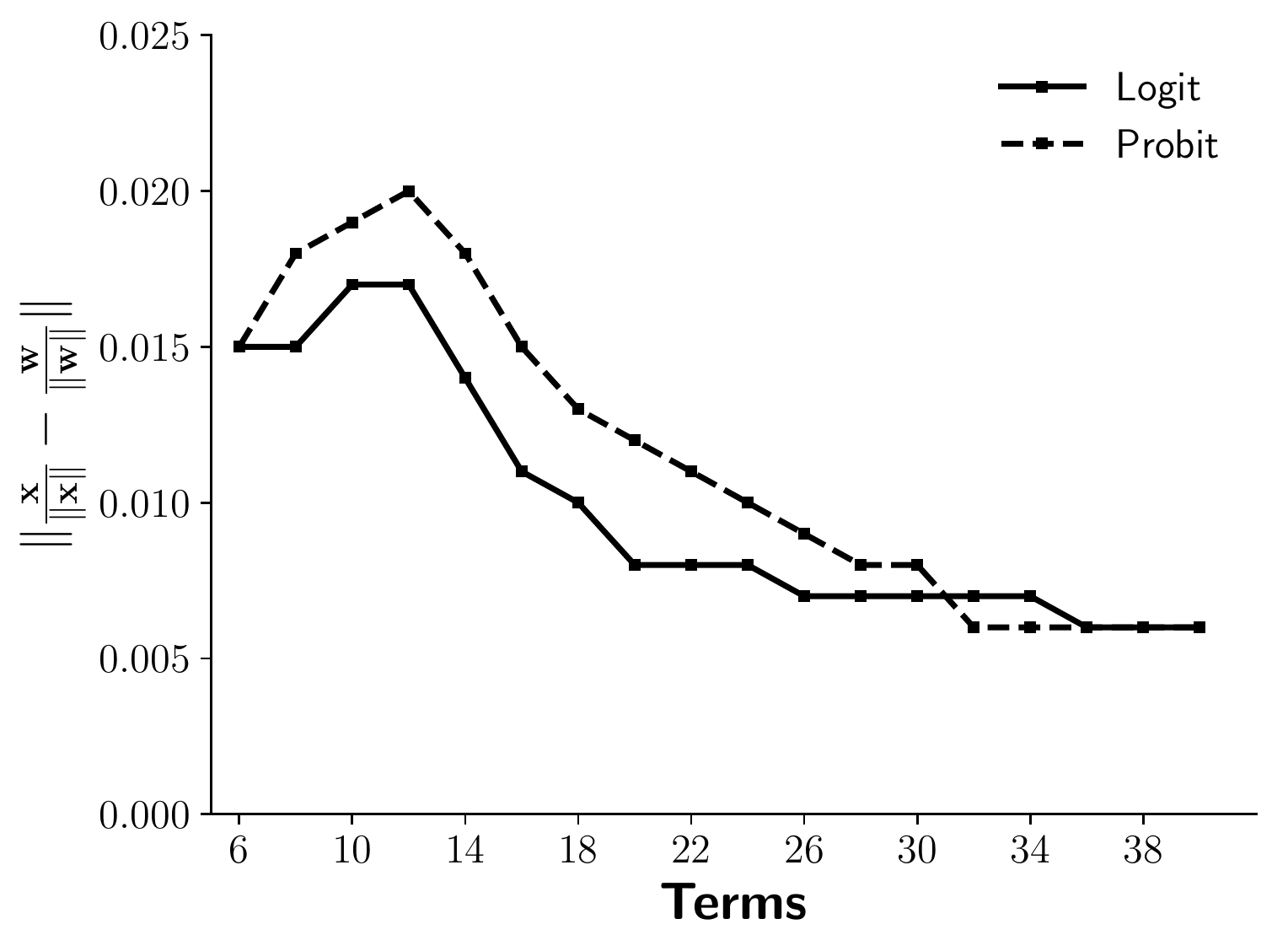}}

\end{centering}
\caption{Euclidean norm of the difference
between the ideal normalized weight vector $w/\|w\|_2$
and its computed approximation $x/\|x\|_2$ as a function of the number of terms
in the Chebyshev approximation of Subsection~\ref{Chebyshev}
(about half the terms vanish, since the polynomial is odd)}
\label{synthterms}
\end{figure}

\subsubsection{Linear least-squares regression (identity link)}

We form the design matrix $A$ and target vector $b$ as follows.
We orthonormalize the columns of an $m \times (n + 1)$ matrix whose entries
are all independent and identically distributed (i.i.d.)\
standard normal variates
to obtain an $m \times n$ matrix $A$ whose columns are orthonormal
($A$ is the leftmost block of $n$ columns)
and an $m \times 1$ column vector $v$ that is orthogonal to all columns of $A$
and such that $\| v \|_2 = 1$ ($v$ is the remaining column).
We define the ideal weights $w$ to be an $n \times 1$ column vector
whose entries are i.i.d.\ standard normal variates.
We define $b$ to be the $m \times 1$ column vector
\begin{equation}
b = Aw + 10 v
\end{equation}
so that by construction
\begin{equation}
\min_x \| Ax - b \|_2^2 = \min_x \| Ax - Aw \|_2^2 + \| 10 v \|_2^2
= \| 10 v \|_2^2 = (10)^2.
\end{equation}
Needless to say, obtaining $x = w$ drives $\| Ax - b \|_2$ to its minimum, 10.

After 10,000 iterations (which is 1,250 epochs) with 8 samples per minibatch
at the learning rate $3 \times 10^{-2}$, the residuals $\| Ax + c - b \|_2$
obtained by training in public and by training in private
are both equal to 10.0 to three significant figures.
For training in private (training in public yields very similar results),
the Euclidean norm of the difference between the ideal $w$
and the computed weight vector $x$ is 0.025
(the Euclidean norm of the difference between $w/\|w\|_2$ and $x/\|x\|_2$
is 0.004), and all entries of the vector $c$ are $-0.008$
(which is $-0.003$ when divided by the Euclidean norm of $x$);
that these values are so small certifies the correctness of the training.

\subsubsection{Logistic regression (logit link)}

We define the ideal weights $w$ to be the result
of normalizing an $n \times 1$ column vector
whose entries are i.i.d.\ standard normal variates, that is, we divide
the vector whose entries are i.i.d.\ by its Euclidean norm, ensuring
\begin{equation}
\label{normalization}
\|w\|_2 = 1.
\end{equation}
The construction of the design matrix $A$ and target vector $t$
is more involved; readers interested only in the results and not the details
of the construction may wish to skip to the last two paragraphs
of the present sub-subsection.

For $j = 1$,~$2$, \dots, $10$,
we construct an $n \times 1$ column vector $v^{(j)}$
whose entries are i.i.d.\ standard normal variates,
project off the component of $v^{(j)}$ along $w$ to obtain $u^{(j)}$,
\begin{equation}
\label{orthogonalization}
u^{(j)} = v^{(j)} - w \sum_{k = 1}^n v^{(j)}_k w_k,
\end{equation}
and set
\begin{equation}
\label{positive}
A_{2j, k} = u^{(j)}_k + 0.02 w_k
\end{equation}
and
\begin{equation}
\label{negative}
A_{2j+1, k} = u^{(j)}_k - 0.02 w_k
\end{equation}
for $k = 1$,~$2$, \dots, $n$.
For the remaining $m - 20$ rows of $A$, we use $m - 20$ rows
from the result of orthonormalizing the columns of an $m \times n$ matrix
whose entries are all i.i.d.\ standard normal variates.

We construct the $m \times 1$ column vector
\begin{equation}
\label{matvec}
b = A w
\end{equation}
and define the target classes
\begin{equation}
\label{targets}
t_j = \operatorname{round}(\sigma(b_j)),
\end{equation}
for $j = 1$,~$2$, \dots, $m$,
where ``$\operatorname{round}$'' rounds to the nearest integer (0 or 1) and
$\sigma$ is the standard logistic function
\begin{equation}
\label{logisticfunc}
\sigma(x) = \frac{1}{1 + \exp(-x)}.
\end{equation}

Combining~(\ref{matvec}), (\ref{positive}), (\ref{negative}),
(\ref{orthogonalization}), and~(\ref{normalization}) yields
\begin{equation}
b_{2j} = \sum_{k=1}^n A_{2j,k} w_k = 0.02
\end{equation}
and
\begin{equation}
b_{2j+1} = \sum_{k=1}^n A_{2j+1,k} w_k = -0.02
\end{equation}
for $j = 1$,~$2$, \dots, $10$. Since, for $j = 1$,~$2$, \dots, $10$,
$b_{2j}$ is slightly positive while $b_{2j+1}$ is slightly negative,
the target classes $t_{2j}$ and $t_{2j+1}$ defined in~(\ref{targets})
will be 1 and 0, respectively,
even though the difference between the corresponding $(2j)$th and $(2j+1)$th
rows of $A$ is small --- combining~(\ref{positive}), (\ref{negative}),
and~(\ref{normalization}) yields that the Euclidean norm of their difference
is 0.04. The decision hyperplane separating class 0 from class 1 will thus
have to pass between 10 pairs of points in $n$-dimensional space ($n = 8$),
with the points in each pair very close to each other
(albeit on opposite sides of the decision hyperplane).
Classifying all these points correctly hence determines the hyperplane
to reasonably high accuracy.

Needless to say, obtaining $x = w$ and $c = 0$
produces perfect accuracy for the logistic regression
which classifies by rounding the result of~(\ref{logisticfunc})
applied to each entry of $Ax + c$, as then $Ax = Aw = b$,
and the target classes in $t$ are the result of~(\ref{logisticfunc})
applied to each entry of $b$. In fact,
obtaining $x$ as any positive multiple of $w$ together with $c = 0$
yields perfect accuracy --- any positive multiple of a vector orthogonal
to the hyperplane separating the two classes specifies that same hyperplane.

After 10,000 iterations (which is 1,250 epochs)
with 8 samples per minibatch at the learning rate 3,
training binary logistic regression in private
(training in public produces very similar results)
drives the Euclidean norm of the difference between $x/\|x\|_2$
and the ideal $w/\|w\|_2$ to 0.006 and drives every entry
of $c / \|x\|_2$ to 0.006,
where $c$ is the vector of offsets (whose entries are all the same).
That these values are so small validates the training.
The log-likelihood, averaged over all $m = 64$ samples,
changes from $-0.785$ to $-0.088$ over the 10,000 iterations
(needless to say, the log-likelihood cannot exceed 0).
The accuracy becomes exactly perfect (that is, 1).

When training multinomial logistic regression for two classes
on the same synthetic data set with the same settings,
similar validation attains:
for training in private (training in public yields very similar results),
the Euclidean norm of the difference between the ideal $w/\|w\|_2$
and the computed $x/\|x\|_2$ becomes $0.008$,
and every entry of $c/\|x\|_2$ becomes either $-0.037$ or $-0.033$,
where $c$ contains the bias offsets.
The log-likelihood, averaged over all $m = 64$ samples,
changes from $-0.958$ to $-0.047$ over the 10,000 iterations,
and the accuracy becomes perfect (that is, exactly 1).
Of course, using multinomial logistic regression with a binomial distribution
rather than standard logistic regression makes no sense,
but these results certify the correctness
of the multinomial logistic regression nonetheless.

\subsubsection{Probit regression (probit link)}
\label{probitlink}

The design matrix $A$ and target vector $t$ for probit regression
are the same as for logistic regression, but replacing the sigmoid $\sigma$
defined in~(\ref{logisticfunc}) with the inverse probit
\begin{equation}
\label{inverseprobit}
\sigma(x) = \frac{1}{\sqrt{2\pi}} \int_{-\infty}^x \exp(-y^2/2) \, dy,
\end{equation}
which is also known as the cumulative distribution function
for standard normal variates.
An iteration of stochastic gradient descent involves selecting rows
of the design matrix $A$ and collecting them together into a matrix $R$,
as well as collecting together the corresponding targets from $t$
into a vector $s$  (the number of rows is the size of the minibatch).
One iteration of stochastic gradient descent for maximizing
the log-likelihood in logistic regression 
updates the weight vector $x$ by adding to $x$ the learning rate $\eta$
times the transpose $R^\top$ applied to the difference
between the corresponding target samples $s$
and $\sigma$ defined in~(\ref{logisticfunc}) applied to each entry of $Rx+c$;
with a minor abuse of notation, we could write that $x$ updates
to $x + \eta R^\top (s-\sigma(Rx+c))$.
Since the sigmoid defined in~(\ref{inverseprobit}) is numerically very similar
to the sigmoid defined in~(\ref{logisticfunc})
(after scaling such that the variances of the sigmoids are the same),
we use the same updating formula for probit regression
as for logistic regression, but using the design matrix $A$
associated with probit regression rather than that for logistic regression,
and using the sigmoid $\sigma$ associated with probit regression
rather than that for logistic regression.
A na\"ive application of stochastic gradient descent for maximizing
the log-likelihood in probit regression would update the weight vector
in the same direction, but scaled slightly, effectively altering
the learning rate a tiny bit from iteration to iteration;
we omit the extra computations required to follow the na\"ive method exactly.

As with logistic regression,
obtaining $x = w$ and $c = 0$ produces perfect accuracy
for the probit regression which classifies by rounding
the result of~(\ref{inverseprobit}) applied to each entry of $Ax + c$. And,
again, obtaining $x$ as any positive multiple of $w$ together with $c = 0$
yields perfect accuracy --- any positive multiple of a vector orthogonal
to the hyperplane separating the two classes specifies that same hyperplane.
After 10,000 iterations (which is 1,250 epochs)
with 8 samples per minibatch at the learning rate 3,
training in private (training in public produces very similar results)
drives the Euclidean norm of the difference between $x/\|x\|_2$
and the ideal $w/\|w\|_2$ to 0.006 and drives every entry
of $c / \|x\|_2$ to 0.005,
where $c$ is the vector of offsets (whose entries are all the same).
That these values are so small validates the training.
The log-likelihood, averaged over all $m = 64$ samples,
changes from $-0.918$ to $-0.058$ over the 10,000 iterations
(needless to say, the log-likelihood cannot exceed 0).
The accuracy becomes precisely perfect (that is, 1).

\subsubsection{Poisson regression (log link)}

We obtain the design matrix $A$ by orthonormalizing the columns
of an $m \times n$ matrix whose entries are i.i.d.\ standard normal variates.
We define $w$ to be 10 times the result of normalizing
an $n \times 1$ column vector whose entries
are i.i.d.\ standard normal variates, that is, we divide
the vector whose entries are i.i.d.\ by its Euclidean norm,
and then multiply by 10, ensuring
\begin{equation}
\label{bigten}
\|w\|_2 = 10.
\end{equation}
We construct the $m \times 1$ column vector
\begin{equation}
\label{biased}
b = Aw + 3,
\end{equation}
where ``3'' indicates the $m \times 1$ column vector whose entries are all 3.
We then define the target counts
\begin{equation}
\label{inttargets}
t_j = \operatorname{round}(\exp(b_j)),
\end{equation}
for $j = 1$,~$2$, \dots, $m$,
where ``$\operatorname{round}$'' rounds to the nearest integer
(0, 1, 2, \dots).
Using 10 as the norm of $w$ in~(\ref{bigten}) ensures that the integers
$t_1$,~$t_2$, \dots, $t_m$ vary over a significant range,
while using 3 in the right-hand side of~(\ref{biased}) ensures that,
on average, half will be greater than $\exp(3) \approx 20$.
Having targets that vary over a significant range and with many not too small
ensures that the maximum-likelihood estimates of $w$ and 3
in Poisson regression are close to $w$ and 3 with reasonably high accuracy ---
the discretization from the rounding operation in~(\ref{inttargets})
matters little when many counts are large and spread over a range
significantly greater than the discretization spacing.

After 10,000 iterations (which is 1,250 epochs)
with 8 samples per minibatch at the learning rate $3 \times 10^{-3}$,
training in private (training in public produces very similar results)
drives the Euclidean norm of the difference between $x/\|x\|_2$
and the ideal $w/\|w\|_2$ to 0.003 and drives every entry of $c-3$ to 0.003,
where $c$ is the vector of offsets (whose entries are all the same).
That these values are so small certifies the correctness of the training,
as does the following result:
the log-likelihood for the obtained weight vector $x$ and offset $c$,
averaged over all $m = 64$ samples, is $-2.349$ after the 10,000 iterations,
which matches the log-likelihood for the ideal weight vector $w$
and ideal offset (3) to three-digit precision.

\subsection{Performance on measured data}
\label{realdata}

The following sub-subsections illustrate the application
of the scheme proposed above to several benchmark data sets,
namely, handwritten digits from MNIST, forest covers from covtype,
and the numbers of deaths from horsekicks
in corps of the Prussian army over two decades.

\subsubsection{MNIST}

We use both binary and multinomial logistic regression,
as well as probit regression,
to analyze a classic data set of handwritten digits,
created by Yann LeCun, Corinna Cortes, and Christopher J. C. Burges
via merging two sets from the National Institute of Standards and Technology;
the mixed NIST set is available at \url{http://yann.lecun.com/exdb/mnist}
as a training set of 60,000 samples and a testing set of 10,000 examples.
Each sample is a centered 28-pixel $\times$ 28-pixel grayscale image
of one of the digits 0--9, together with a label for which one;
pixel values can range from 0 to 1.
For multinomial logistic regression we use all 10 classes
(with one class per digit); for binary logistic and probit regressions,
we use the 2 classes corresponding to the digits 0 and 1,
for which there are 12665 samples in the training set,
and 2115 samples in the testing set.
We set the number of samples in a minibatch to 50 for training
with all 10 classes, and to 85 for training with only the 2 classes
corresponding to the digits 0 and 1.
We trained for 20 epochs (that is, 24,000 iterations)
at learning rate $10^{-2}$ with all 10 classes,
and for 30 epochs (that is, 4,470 iterations) at learning rate $10^{-3}$
with only the 2 classes corresponding to the digits 0 and 1.
During training for all 10 classes, we supplemented each iteration
of stochastic gradient descent with weight decay of $10^{-3}$,
which is equivalent to adding to the objective function being minimized
(that is, to the negative log-likelihood)
a regularization term of $10^{-3}$ times half the square
of the Euclidean norm of the weights.
This weight decay has negligible impact on accuracy
yet ensures that the constant 5 we subtract from the bias
in the stochastic gradient descent is sufficient to make almost all inputs
of the softmax be non-positive, as suggested in the last paragraph
of Subsection~\ref{softmax} above.

Figure~\ref{mnisttrain} plots the results of training
and Figure~\ref{mnisttest} displays the performance
of the resulting trained model when applied to the testing set,
both as a function of the maximum value $\gamma$ of the random variable
distributed uniformly over $[-\gamma, \gamma]$;
the figures show that $\gamma = 10^5$ works well, in agreement
with the analysis in Subsection~\ref{two-party} above.
Table~\ref{mnisttab} details the results for $\gamma = 10^5$;
training in public produces the same results
at the precision reported in the table.
For this application to machine learning,
even $\gamma = 10^6$ produces practically perfect predictions
during both training and testing.
Logistic regression corresponds to the logit link;
probit regression corresponds to the probit link.
The value of the log-likelihood averaged over the testing set
is remarkably similar to the average value over the training set,
showing that training generalizes well to the independent testing set.

Figure~\ref{mnistaccs} displays the results of training
multinomial logistic regression for all 10 classes (one class per digit),
along with applying the resulting trained model to the testing set,
as a function of the maximum $\gamma$ of the random variable
distributed uniformly over $[-\gamma, \gamma]$;
the accuracy exceeds 0.9\ from $\gamma = 10^3$ to $\gamma = 10^6$.
The generalization from the training set to the testing set is ideal.
At $\gamma = 10^5$, the training loss is 0.318, the testing loss is 0.305,
the training accuracy is 0.913, and the testing accuracy is 0.917;
training in public yields the same results to three-digit precision.

\begin{table}
\begin{center}
\begin{tabular}{lllll}
         &              train &               test &    train &     test \\
    link & loss\phantom{racy} & loss\phantom{racy} & accuracy & accuracy \\
\hline
identity &              0.063 &              0.057 &          &          \\
   logit &              0.039 &              0.033 &    0.997 &    0.999 \\
  probit &              0.025 &              0.019 &    0.997 &    0.999
\end{tabular}
\end{center}
\caption{Values of the negative log-likelihood (the ``loss'')
and accuracy averaged over the training samples or testing samples
from MNIST,
with $\gamma = 10^5$ being the maximal possible value of the random variable
distributed uniformly over $[-\gamma, \gamma]$ added to shares of the data}
\label{mnisttab}
\end{table}

\begin{figure}
\begin{centering}

\parbox{.71\textwidth}{\includegraphics[width=.7\textwidth]{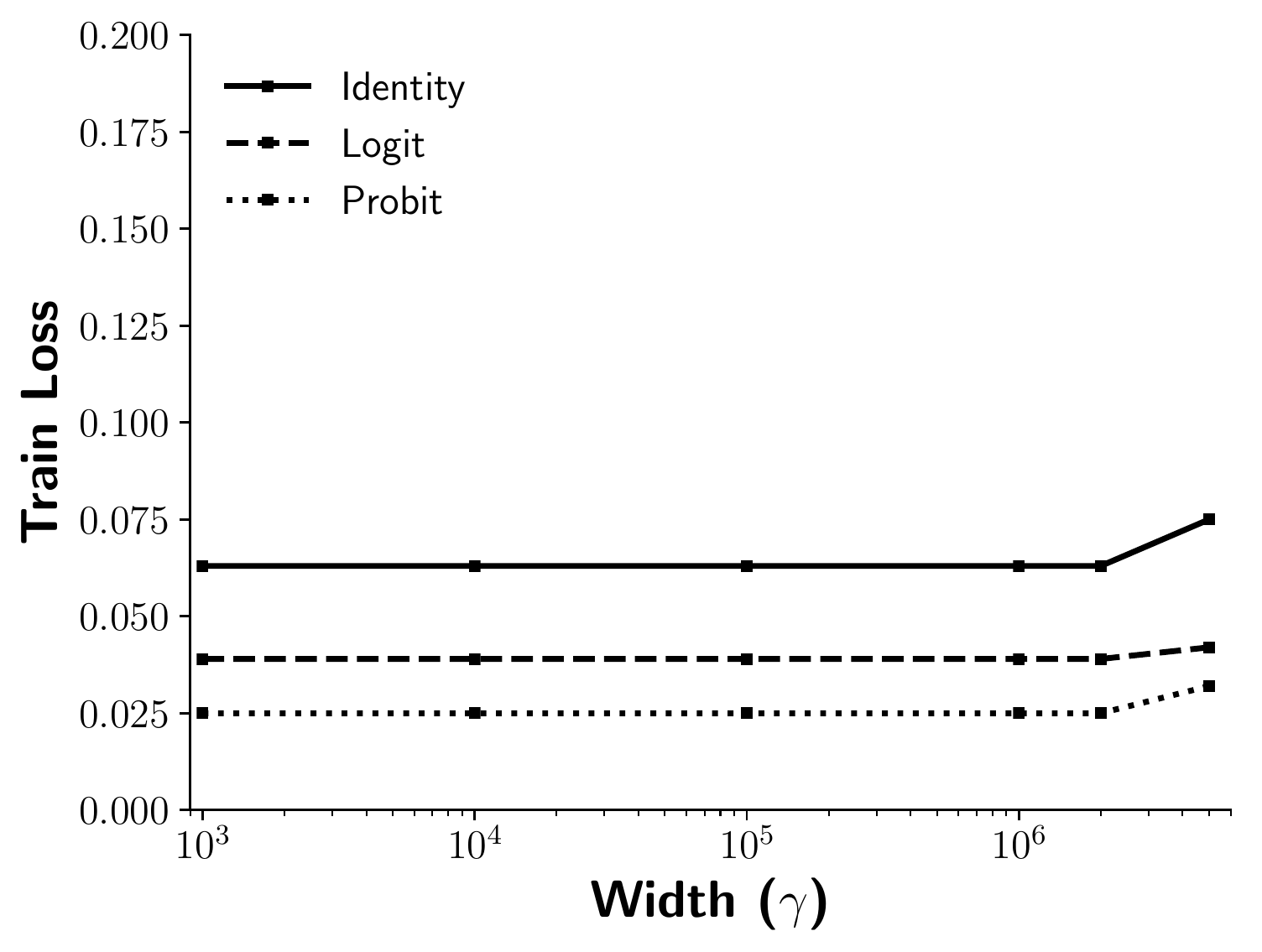}}

\end{centering}
\caption{Value of the negative log-likelihood (the ``loss'')
averaged over the training data from MNIST,
after convergence of the training iterations, as a function
of the maximum value $\gamma$ of the random variable distributed uniformly
on $[-\gamma, \gamma]$ added to the shares of the data}
\label{mnisttrain}
\end{figure}

\begin{figure}
\begin{centering}

\parbox{.71\textwidth}{\includegraphics[width=.7\textwidth]{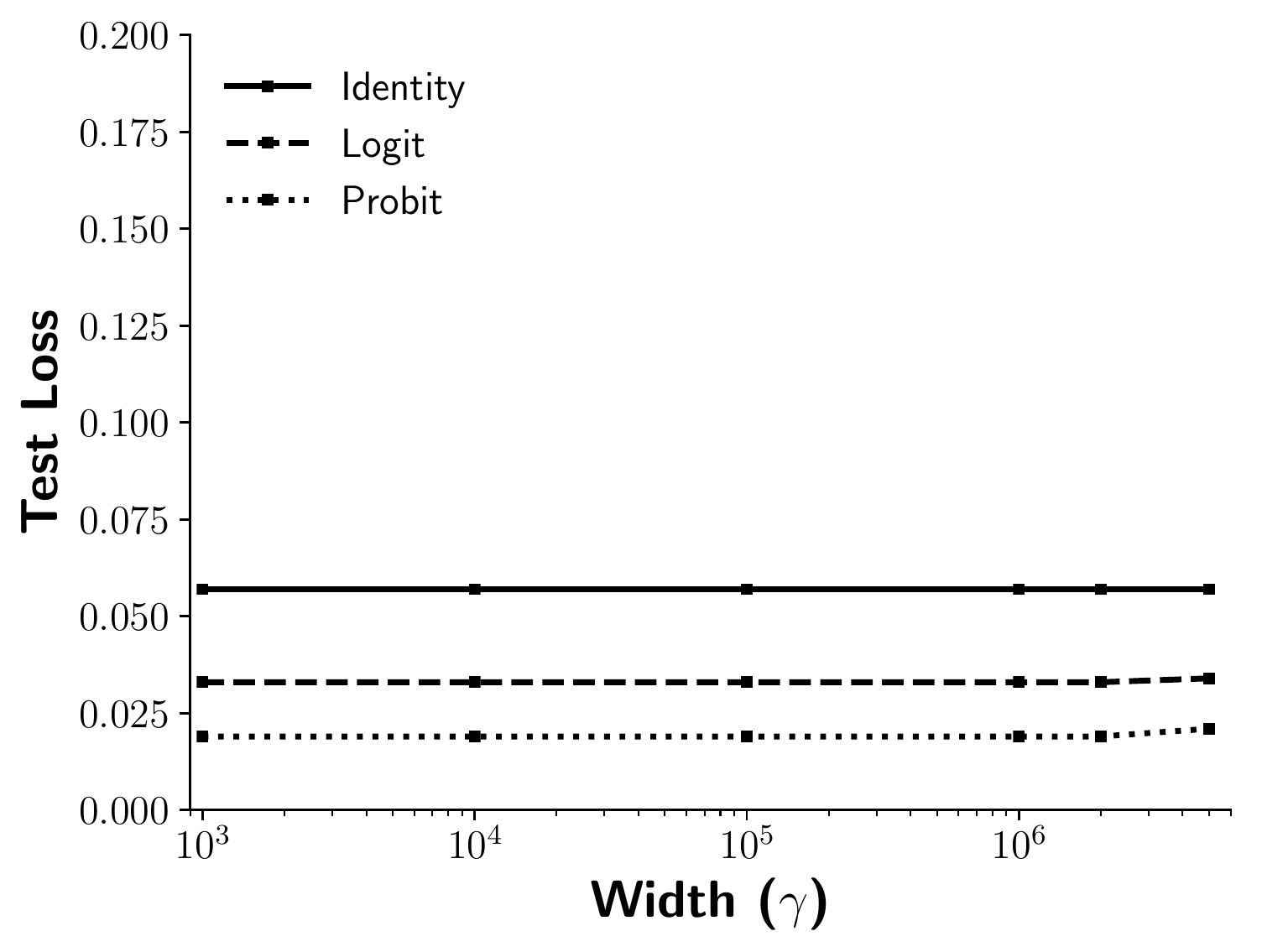}}

\end{centering}
\caption{Value of the negative log-likelihood (the ``loss'')
averaged over the testing data from MNIST, as a function
of the maximum value $\gamma$ of the random variable distributed uniformly
on $[-\gamma, \gamma]$ added to the shares of the data}
\label{mnisttest}
\end{figure}

\begin{figure}
\begin{centering}

\parbox{.71\textwidth}{\includegraphics[width=.7\textwidth]{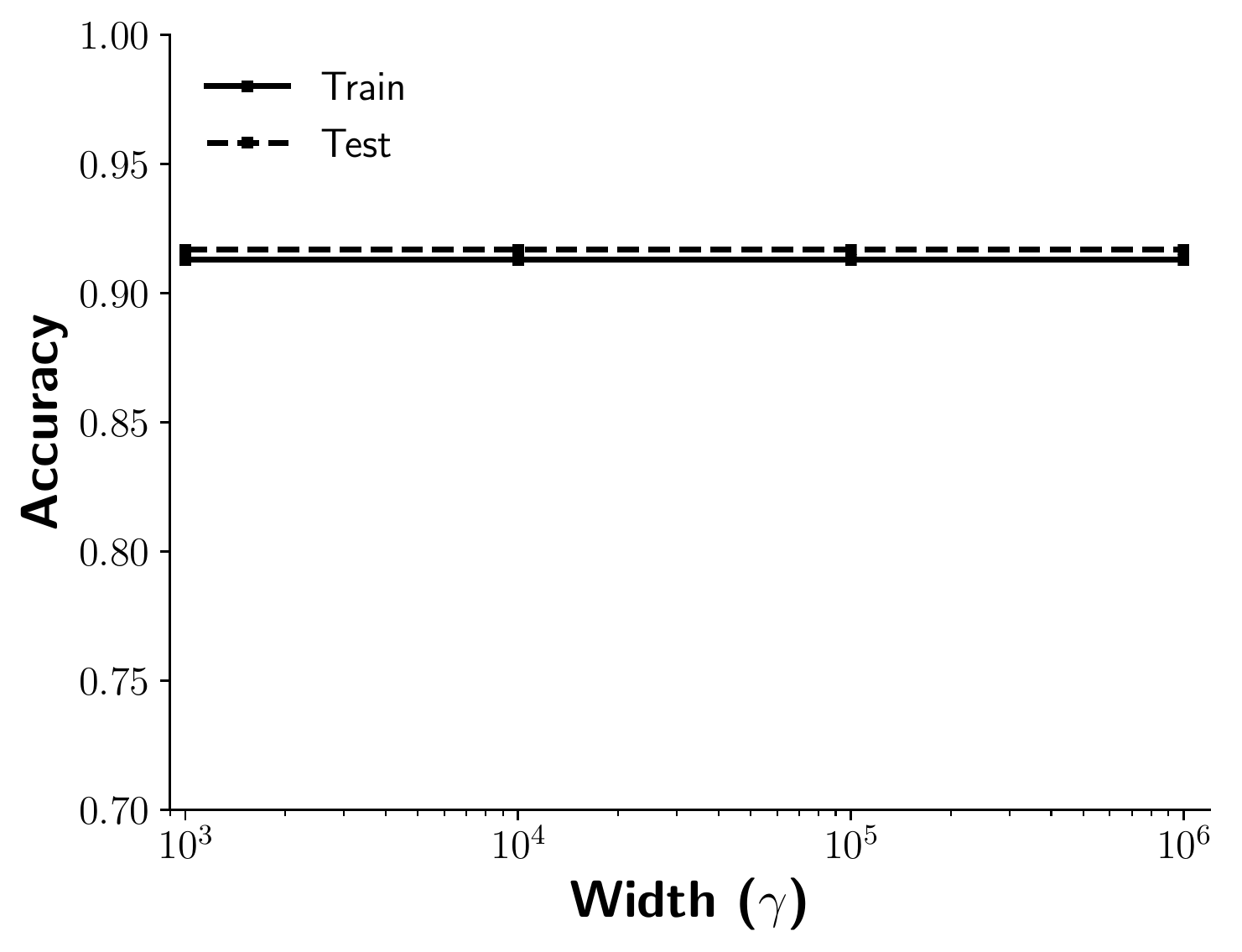}}

\end{centering}
\caption{Accuracy of multinomial logistic regression averaged
over the training or testing data from MNIST,
as a function of the maximum value $\gamma$
of the random variable distributed uniformly on $[-\gamma, \gamma]$
added to the shares of the data}
\label{mnistaccs}
\end{figure}

\subsubsection{Forest cover type}

We use both binary and multinomial logistic regression,
as well as probit regression,
to analyze standard data on the type of forest cover
based on cartographic variates from Jock A. Blackard
of the United States Forest Service,
Denis J. Dean of the University of Texas at Dallas,
and Charles W. Anderson of Colorado State University
(with copyright retained by Jock A. Blackard and Colorado State University);
the original data is available from the archive of~\cite{dua-graff} at
\url{http://archive.ics.uci.edu/ml/datasets/covertype}
and the preprocessed and formatted versions that we use are available
from the work of~\cite{chang-lin} at
\url{http://www.csie.ntu.edu.tw/~cjlin/libsvmtools/datasets/binary.html#covtype.binary} (for binary classification)
and
\url{http://www.csie.ntu.edu.tw/~cjlin/libsvmtools/datasets/multiclass.html#covtype} (for the multinomial logistic regression).

We predict one of 7 types of forest (or one of 2 for the binarized data)
based on 10 integer-valued covariates
(elevation, aspect, slope, horizontal and vertical distances to bodies
of water, horizontal distance to roadways, horizontal distance to fire points,
and hillshade at 9am, 12pm, and 3pm), as well as one-hot encodings
of 4 types of wilderness areas and 40 types of soil.
Thus, there are 54 covariates in all, including the one-hot encodings.
(A one-hot encoding is a vector whose entries are all 0 except for one 1
in the position corresponding to the associated type.)
For normalization, we subtract the mean
from each of the integer-valued covariates (not from the one-hot encodings)
and then divide by the maximum absolute value.
We randomly permute and then partition the data
into a training set of 500,000 samples and a testing set
of the other 81,012 samples.
We trained for 20 epochs (that is, 10,000 iterations)
with 1,000 samples per minibatch.
The learning rate for the binary classification was 3
(for identity link was 0.1)
and for the multi-class (7-class) classification was 1.
During training for all 7 classes, we supplemented each iteration
of stochastic gradient descent with weight decay of $10^{-3}$,
to be consistent with our training for all 10 classes of MNIST
in the previous sub-subsection.
This weight decay barely impacts accuracy
yet makes the constant 5 that we subtract from the bias
in the stochastic gradient descent shift almost all inputs
of the softmax to be non-positive, as suggested in the last paragraph
of Subsection~\ref{softmax} above.

Figure~\ref{covtypetrain} displays the results of training
and Figure~\ref{covtypetest} depicts the performance
of the resulting trained model when applied to the testing set,
both as a function of the width $\gamma$ of the random variable
distributed uniformly over $[-\gamma, \gamma]$;
the figures show that $\gamma = 10^5$ works well, in accordance
with the analysis in Subsection~\ref{two-party} above.
Table~\ref{covtypetab} details the results for $\gamma = 10^5$;
training in public yields the same results
to within $\pm 0.001$ of those reported in the table.
In fact, even $\gamma = 10^6$ works perfectly fine for this application
to machine learning.
Logistic regression corresponds to the logit link;
probit regression corresponds to the probit link.
The value of the log-likelihood averaged over the testing set
is reassuringly close to the average value over the training set,
demonstrating good generalization from the training set to the testing set.

Figure~\ref{covtypeaccs} displays the results of training
multinomial logistic regression for all 7 classes,
together with applying the resulting trained model to the testing set,
as a function of the maximum $\gamma$ of the random variable
distributed uniformly over $[-\gamma, \gamma]$;
the accuracy is excellent from $\gamma = 10^3$ to $\gamma = 10^6$.
The generalization from the training set to the testing set is perfect.
At $\gamma = 10^5$, the training loss is 0.705, the testing loss is 0.711,
the training accuracy is 0.710, and the testing accuracy is 0.710;
training in public produces the same results to within $\pm 0.001$.

\begin{table}
\begin{center}
\begin{tabular}{lllll}
         &              train &               test &    train &     test \\
    link & loss\phantom{racy} & loss\phantom{racy} & accuracy & accuracy \\
\hline
identity &              0.175 &              0.176 &          &          \\
   logit &              0.515 &              0.515 &    0.755 &    0.758 \\
  probit &              0.517 &              0.517 &    0.755 &    0.756
\end{tabular}
\end{center}
\caption{Values of the negative log-likelihood (the ``loss'')
and accuracy averaged over the training samples or testing samples
from data on forest cover type,
with $\gamma = 10^5$ being the maximal possible value of the random variable
distributed uniformly over $[-\gamma, \gamma]$ added to shares of the data}
\label{covtypetab}
\end{table}

\begin{figure}
\begin{centering}

\parbox{.71\textwidth}{\includegraphics[width=.7\textwidth]{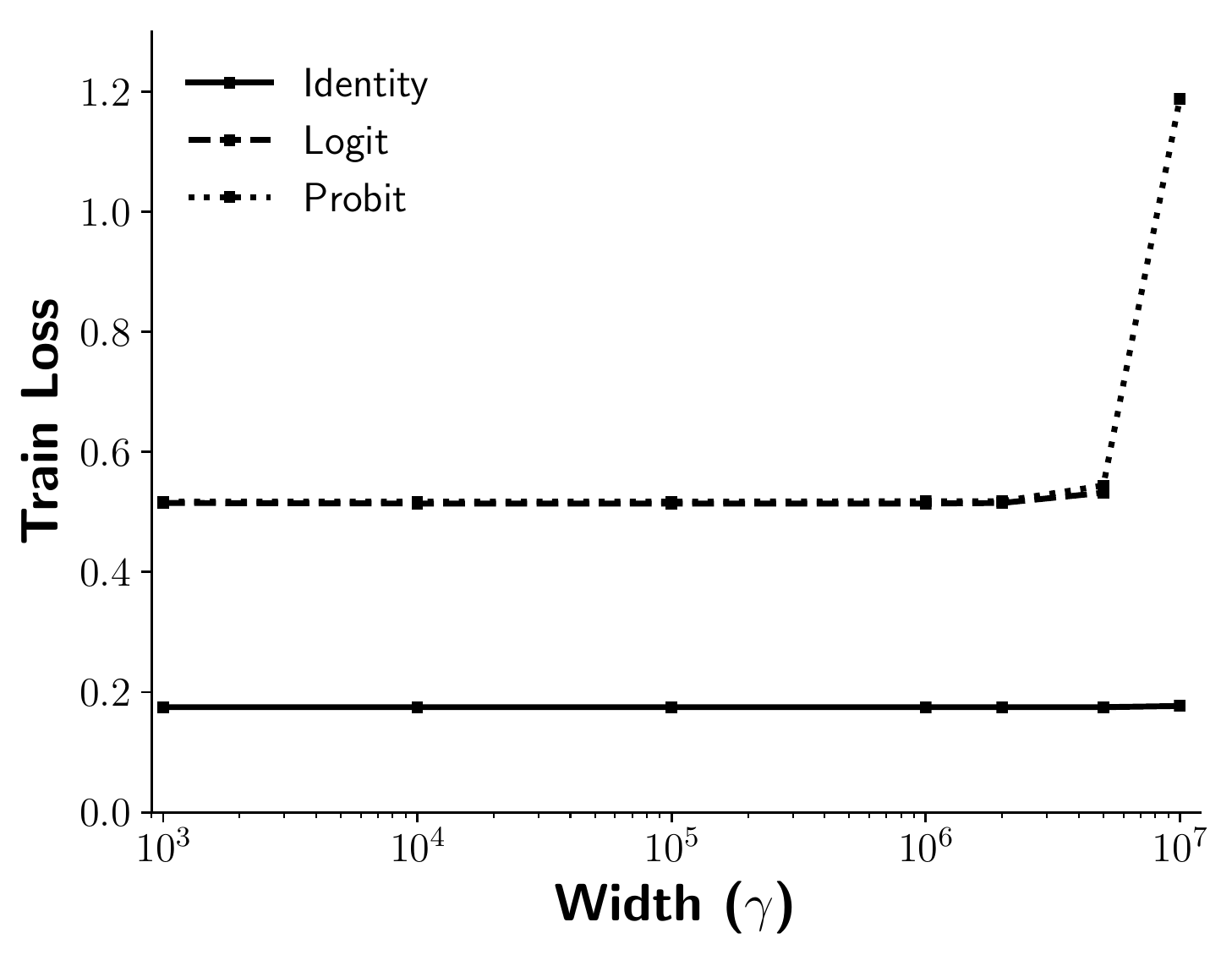}}

\end{centering}
\caption{Value of the negative log-likelihood (the ``loss'')
averaged over the training data on forest cover type,
after convergence of the training iterations, as a function
of the maximum value $\gamma$ of the random variable distributed uniformly
on $[-\gamma, \gamma]$ added to the shares of the data}
\label{covtypetrain}
\end{figure}

\begin{figure}
\begin{centering}

\parbox{.71\textwidth}{\includegraphics[width=.7\textwidth]{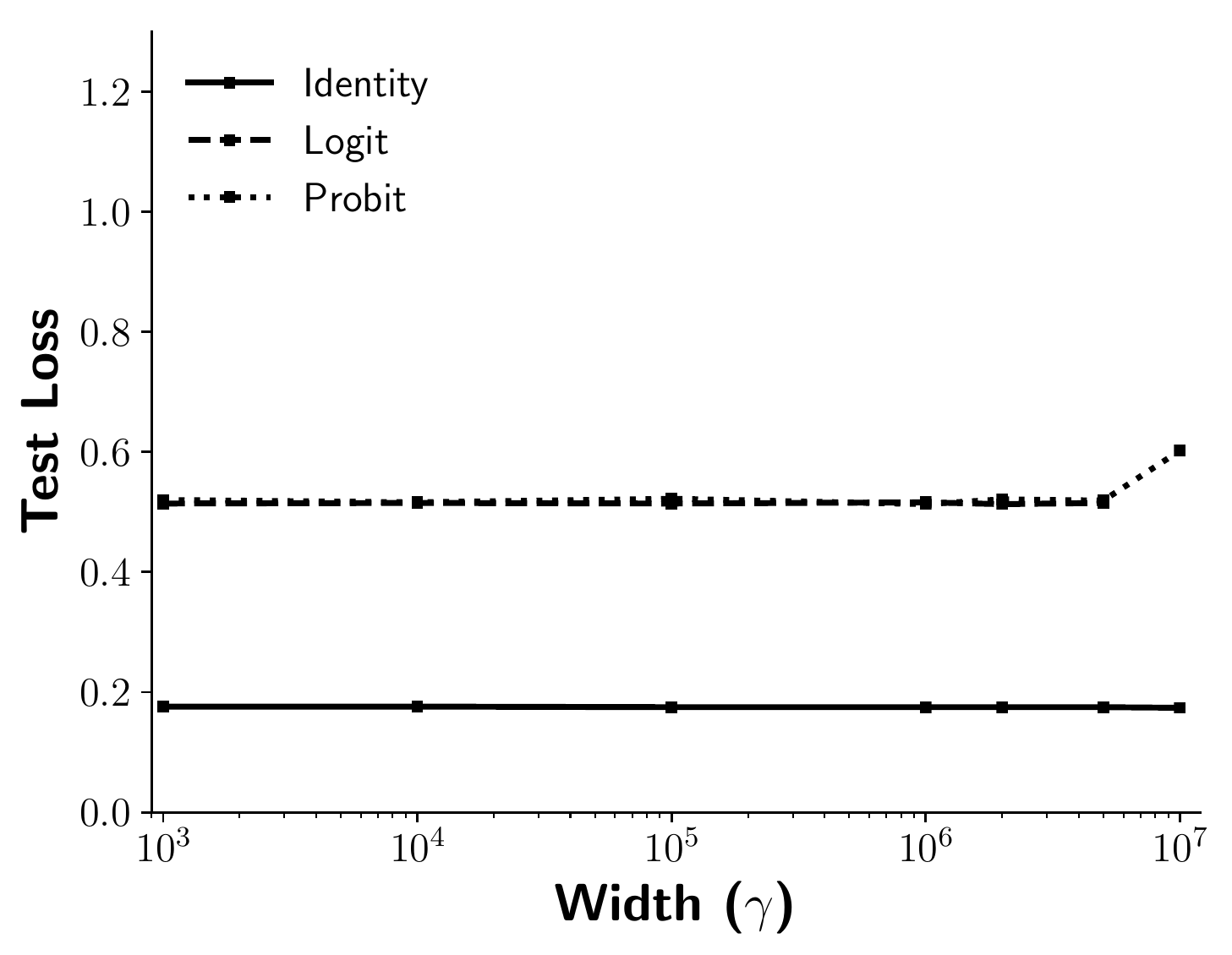}}

\end{centering}
\caption{Value of the negative log-likelihood (the ``loss'')
averaged over the testing data on forest cover type, as a function
of the maximum value $\gamma$ of the random variable distributed uniformly
on $[-\gamma, \gamma]$ added to the shares of the data}
\label{covtypetest}
\end{figure}

\begin{figure}
\begin{centering}

\parbox{.71\textwidth}{\includegraphics[width=.7\textwidth]{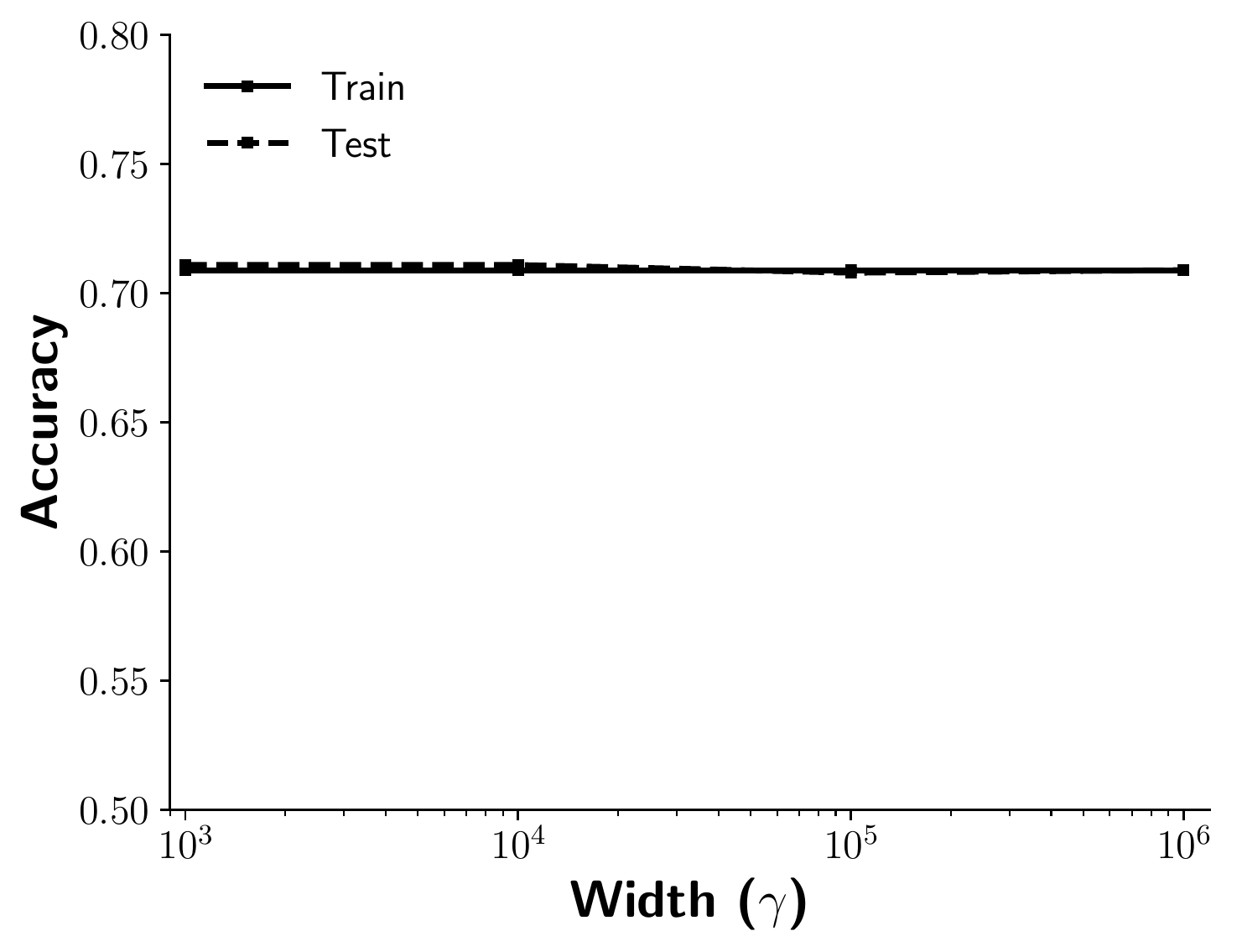}}

\end{centering}
\caption{Accuracy of multinomial logistic regression averaged
over the training or testing data on forest cover type,
as a function of the maximum value $\gamma$
of the random variable distributed uniformly on $[-\gamma, \gamma]$
added to the shares of the data}
\label{covtypeaccs}
\end{figure}

\subsubsection{Deaths from horsekicks}

We use Poisson regression to analyze the classical data
from Ladislaus Bortkiewicz tabulating the numbers of deaths from horsekicks
in 14 corps of the Prussian army for each of the 20 years from 1875 to 1894,
available (with extensive discussion) in the work of~\cite{stigler}.
This data is a canonical example of counts which follow
the Poisson distribution.
We consider four separate Poisson regressions,
for the following sets of covariates:
(0) no covariates, (1) one-hot encodings of the corps,
(2) second-order polynomials of the years, and
(3) concatenating the one-hot encodings of the corps
and the second-order polynomials of the years.
The one-hot encoding of a corps is a vector with 14 entries,
one of which is 1 and 13 of which are 0; the position of the entry that is 1
corresponds to the associated corps.
The second-order polynomials of the years arise from using as covariate vector
a vector with 3 entries, the entries being the constant 1,
the number of years beyond 1875,
and the square of the number of years beyond 1875; Poisson regression
considers linear combinations of these entries,
thus forming quadratic functions of the years.

The log-likelihoods of the fully trained models are the same
to three-digit precision when comparing training in public
to training in private; with 14 samples per minibatch,
convergence of the log-likelihoods required 50,000 iterations
(which is 2,500 epochs)
with the second-order polynomials of the years in the covariates
but only 10,000 iterations (500 epochs) without,
all at a learning rate $2 \times 10^{-2}$.
The negative log-likelihoods of the trained models for the four sets
of covariates converge to (0) 1.124, (1) 1.077, (2) 1.107, and (3) 1.061,
averaged over all samples
(there are 280 samples in total --- 14 corps for each of 20 years).
Twice the negative log-likelihood is sometimes called the ``deviance,''
which is generally defined only up to an additive constant.
The decrease in deviance when moving from covariates (0) to (1) is 26.3,
the decrease from (0) to (2) is 9.5, and from (0) to (3) is 35.3;
these deviances refer to the totals over all 280 samples,
so are 280 times the average over the samples.
The degrees of freedom corresponding to each of these changes
is less than the decrease in deviance,
indicating some minor heterogeneity in the data.
(The baseline deviance generally has no statistical interpretation
in terms of a universal distribution such as $\chi^2$;
changes in deviance when changing covariates are what matter.)

With 14 samples per minibatch, each iteration of stochastic gradient descent
takes about $1.7 \times 10^{-2}$ seconds when training in private
for any of the four sets of covariates; the smallest set (0) takes
about $0.5 \times 10^{-2}$ seconds less per iteration than the largest set (3).
When training in public, each iteration takes about $1.7 \times 10^{-4}$
seconds for any of the four sets of covariates.
Training in private thus takes about 100 times longer
than training in public on standard machines in a cluster
for software development at Facebook.

\appendix
\section{Proof of Theorem~\ref{infoleak}}
\label{mainproof}

A simple, standard computation yields that the differential entropy, in bits,
of a random variable $Y$ distributed uniformly over $[-\gamma, \gamma]$ is
\begin{equation}
\label{standard}
H(Y) = \log_2(2\gamma).
\end{equation}
Combining~(\ref{mutual}), (\ref{standard}),
and the upper-bound on $H(X+Y)$ from the following lemma yields~(\ref{worst}),
completing the proof of Theorem~\ref{infoleak}.

\begin{lemma}
\label{basiclemma}
Suppose that $X$ and $Y$ are independent scalar random variables
and $\beta$ and $\gamma$ are positive real numbers
such that $|X| \le \beta < \gamma$ and $Y$ is distributed uniformly
over $[-\gamma, \gamma]$. Then,
\begin{equation}
\label{maxent}
H(X+Y) \le \frac{\beta}{\gamma} + \log_2(2\gamma),
\end{equation}
where $H$ denotes the differential entropy measured in bits (not nats),
with equality attained in~(\ref{maxent})
when $X$ is $\beta$ times a Rademacher variate, that is,
when $X = \beta$ with probability $1/2$
and $X = -\beta$ with probability $1/2$.
\end{lemma}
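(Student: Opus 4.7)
The plan is to compute $H(X+Y)$ in closed form as an integral involving the CDF of $X$, and then to bound that integral via a pointwise bound on the binary entropy function. Let $G$ denote the CDF of $X$, so $G(u) = P(X \le u)$. Since $Y$ has density $(1/(2\gamma))\mathbf{1}_{[-\gamma,\gamma]}$, convolution yields
\[
f_{X+Y}(z) = \frac{1}{2\gamma} \cdot P\bigl(X \in [z-\gamma, z+\gamma]\bigr),
\]
which, because $|X| \le \beta < \gamma$, takes three distinct forms on three subintervals of its support: on the left transition $[-\gamma-\beta, -\gamma+\beta]$ it equals $G(z+\gamma)/(2\gamma)$; on the interior $[-\gamma+\beta, \gamma-\beta]$ the kernel already covers the full support of $X$, so $f_{X+Y}(z) = 1/(2\gamma)$; and on the right transition $[\gamma-\beta, \gamma+\beta]$ it equals $(1-G(z-\gamma))/(2\gamma)$.

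The key algebraic step is to change variables $u = z+\gamma$ in the left transition and $u = z-\gamma$ in the right, so that each integral ranges over $u \in [-\beta,\beta]$. After expanding $\log_2(G/(2\gamma)) = \log_2 G - \log_2(2\gamma)$ and similarly for the right transition, one groups the $\log_2(2\gamma)$ prefactors and uses the identity $\int_{-\beta}^{\beta} G(u)\,du + \int_{-\beta}^{\beta}(1-G(u))\,du = 2\beta$ to combine them with the interior contribution $(1 - \beta/\gamma)\log_2(2\gamma)$. Everything collapses to
\[
H(X+Y) = \log_2(2\gamma) + \frac{1}{2\gamma}\int_{-\beta}^{\beta} H_2\bigl(G(u)\bigr)\,du,
\]
where $H_2(p) = -p\log_2 p - (1-p)\log_2(1-p)$ is the binary entropy function.

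The inequality~(\ref{maxent}) then follows immediately from the universal bound $H_2(p) \le 1$ on $[0,1]$: its integral over $[-\beta,\beta]$ is at most $2\beta$, and division by $2\gamma$ produces exactly $\beta/\gamma$. Equality in $H_2 \le 1$ forces $p = 1/2$, so equality in~(\ref{maxent}) requires $G(u) = 1/2$ for almost every $u \in (-\beta,\beta)$; monotonicity of $G$ upgrades this to \emph{every} such $u$ (if $G(u_0) > 1/2$ somewhere, monotonicity forces $G > 1/2$ on $[u_0,\beta)$, a set of positive measure, and symmetrically for $G(u_0) < 1/2$). Combined with $|X| \le \beta$ (so $G(-\beta^-) = 0$ and $G(\beta) = 1$), this places mass $1/2$ at each of $\pm\beta$ and no mass elsewhere, identifying $X$ uniquely as $\beta$ times a Rademacher variate; a direct substitution confirms that this $X$ attains the bound.

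The main obstacle is the bookkeeping in the three-region decomposition --- in particular, verifying that the $\log_2(2\gamma)$ prefactors arising from the two transitions recombine cleanly with the interior contribution to leave a total prefactor of exactly $\log_2(2\gamma)$, rather than the looser $\log_2(2(\gamma+\beta))$ that a naive uniform-support argument produces (which gives only the weaker bound $\beta/(\gamma\ln 2)$). A minor technicality is that $X$ need not have a density --- Rademacher itself does not --- but because $Y$ has a density, $Z = X+Y$ always has a density and the formulas above remain valid with $G$ interpreted as the right-continuous CDF, since the countable set of atoms has Lebesgue measure zero and does not affect any of the integrals.
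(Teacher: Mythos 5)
Your proposal is correct and follows essentially the same route as the paper's proof of Lemma~\ref{basiclemma}: the same convolution formula $g(z) = (F(z+\gamma)-F(z-\gamma))/(2\gamma)$, the same three-region decomposition with change of variables collapsing the entropy to $\log_2(2\gamma) + \frac{1}{2\gamma}\int_{-\beta}^{\beta} H_2(F(u))\,du$, and the same pointwise bound on the binary entropy with equality at $F \equiv 1/2$ identifying the Rademacher extremizer. Your added remarks on the equality case and on $X$ lacking a density are careful refinements rather than a different argument.
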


\begin{proof}
Denoting the cumulative distribution function of $X$ by $F$,
the probability density function of $X+Y$ is
\begin{equation}
\label{convolution}
g(y) = \frac{1}{2\gamma} \int_{-\gamma}^{\gamma} dF(y-x)
     = \frac{1}{2\gamma} \int_{y-\gamma}^{y+\gamma} dF(x)
     = \frac{F(y+\gamma) - F(y-\gamma)}{2\gamma}.
\end{equation}
Combining $|X| \le \beta$ and the definition
of a cumulative distribution function
yields that $F(x) = 0$ for $x < -\beta$ and $F(x) = 1$ for $x > \beta$,
so~(\ref{convolution}) becomes (as diagrammed in Figure~\ref{visualization})
\begin{equation}
g(y) = \left\{\begin{array}{ll}
0, & y < -\gamma-\beta \\
\frac{F(y+\gamma)}{2\gamma}, & -\gamma-\beta < y < -\gamma+\beta \\
\frac{1}{2\gamma}, & -\gamma+\beta < y < \gamma-\beta \\
\frac{1 - F(y-\gamma)}{2\gamma}, & \gamma-\beta < y < \gamma+\beta \\
0, & y > \gamma+\beta
\end{array}\right.
\end{equation}
and the entropy in bits of $X+Y$ is
\begin{multline}
H(X+Y) = -\int_{-\infty}^{\infty} g(y) \, \log_2(g(y)) \, dy \\
       = -\frac{1}{2\gamma} \int_{\gamma-\beta}^{\gamma+\beta} \left(
         F(\gamma-y) \, \log_2\left(\frac{F(\gamma-y)}{2\gamma}\right)
       + \left(1 - F(y-\gamma)\right) \,
         \log_2\left(\frac{1 - F(y-\gamma)}{2\gamma}\right)
       \right) \, dy \\
       - \frac{1}{2\gamma} \int_{-\gamma+\beta}^{\gamma-\beta}
         \log_2\left(\frac{1}{2\gamma}\right) \, dy \\
       = -\frac{1}{2\gamma} \int_{-\beta}^{\beta} \Bigl(
         F(-y) \, \log_2(F(-y)) + \left(1 - F(y)\right) \, \log_2(1 - F(y))
       \Bigr) \, dy + \log_2(2\gamma) \\
       = -\frac{1}{2\gamma} \int_{-\beta}^{\beta} \Bigl(
         F(y) \, \log_2(F(y)) + \left(1 - F(y)\right) \, \log_2(1 - F(y))
       \Bigr) \, dy + \log_2(2\gamma).
\label{entropy}
\end{multline}

The function $p \log_2(p) + (1-p) \log_2(1-p)$ is minimal for $0 < p < 1$
at $p = 1/2$, so the integral in the right-hand side of~(\ref{entropy})
is minimal when $F(y) = 1/2$ for $|y| < \beta$, in which case~(\ref{entropy})
becomes~(\ref{maxent}) with equality attained
when $X$ is $\beta$ times a Rademacher variate.
\end{proof}

\begin{figure}
\begin{centering}

\parbox{.99\textwidth}{\includegraphics[width=.98\textwidth]{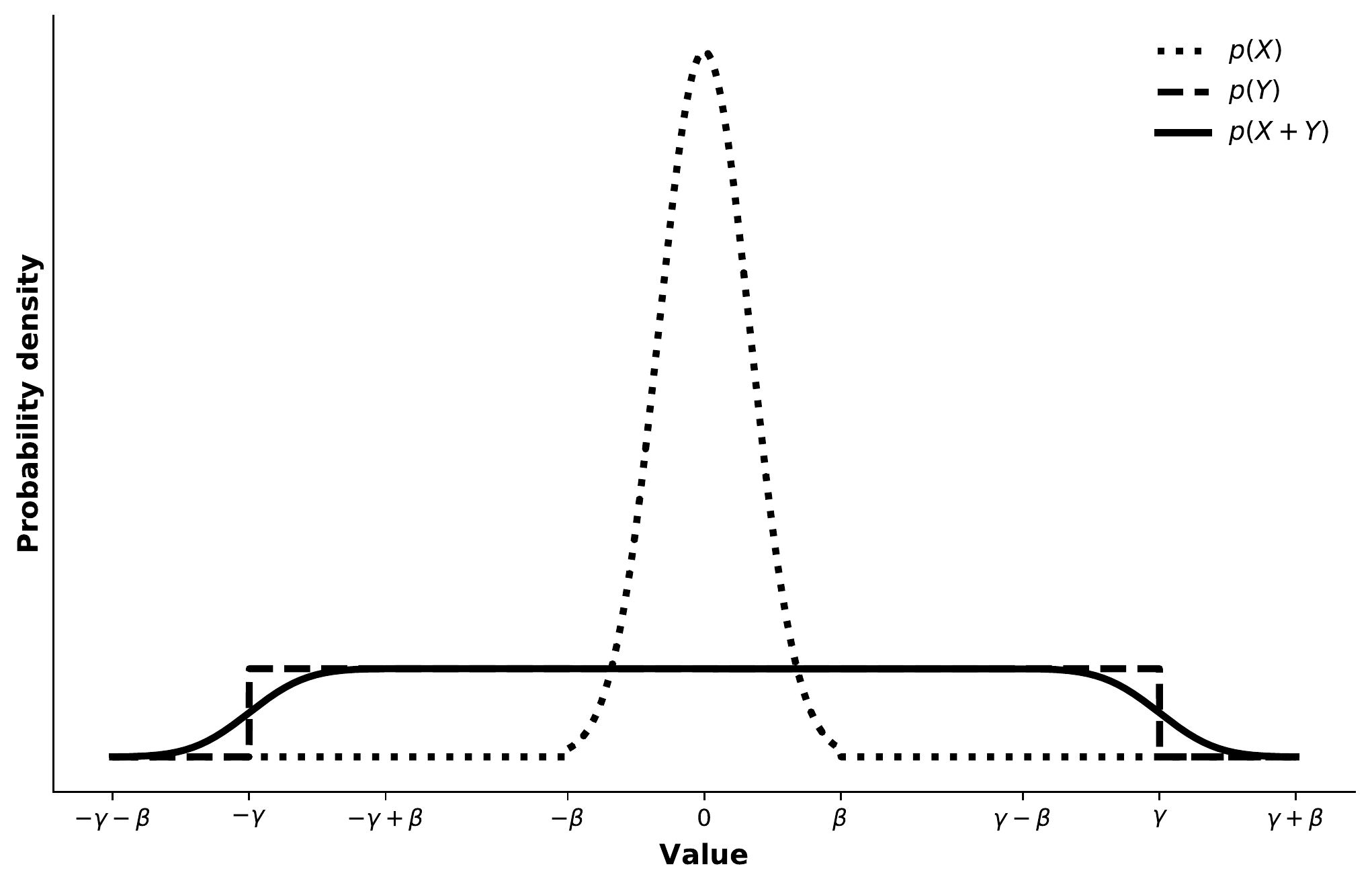}}

\end{centering}
\caption{Sketch of a visualization for understanding the proof
of Lemma~\ref{basiclemma}, in the case where the probability distribution
of $X$ is given by an even unimodal probability density}
\label{visualization}
\end{figure}

\section{Proof of Theorem~\ref{chaining}}
\label{simpleproof}

For any scalar random variables $X$, $Y$, and $Z$,
the definition of mutual information states
\begin{equation}
\label{def}
I(X;\, X+Y,\, X+Z) = H(X+Y,\, X+Z) - H(X+Y,\, X+Z \;|\; X),
\end{equation}
\begin{equation}
\label{def2}
I(X;\, X+Y) = H(X+Y) - H(X+Y \;|\; X),
\end{equation}
and
\begin{equation}
\label{def3}
I(X;\, X+Z) = H(X+Z) - H(X+Z \;|\; X),
\end{equation}
where $H$ denotes entropy.
Furthermore, the subadditivity of entropy for any arbitrary random variables
yields
\begin{equation}
\label{subentropy}
H(X+Y,\, X+Z) \le H(X+Y) + H(X+Z).
\end{equation}
Taking $X$, $Y$, and $Z$ to be independent then yields
\begin{multline}
\label{independence}
H(X+Y,\, X+Z \;|\; X) = H(Y,\, Z \;|\; X) = H(Y,\, Z) = H(Y) + H(Z) \\
= H(Y \;|\; X) + H(Z \;|\; X) = H(X+Y \;|\; X) + H(X+Z \;|\; X).
\end{multline}
Combining~(\ref{def})--(\ref{independence}) yields~(\ref{subadditivity}),
completing the proof of Theorem~\ref{chaining}.

\section{Proof of Theorem~\ref{Beaverdetails}}
\label{detailed}

We suppose that $X$, $Y$, $P$, $Q$, and $T$
are independent scalar random variables and $\gamma$ is a positive real number
such that $|X| \le 1 < 3 < \gamma$, the random variable $T$
is distributed uniformly over $[-\gamma^2, \gamma^2]$,
and $Y$, $P$, and $Q$ are distributed uniformly over $[-\gamma, \gamma]$. 
Then, since $(X-Y)-(P-Q)$ is just the difference between $X-Y$ and $P-Q$,
providing no more and no less information than $X-Y$ and $P-Q$ on their own
without $(X-Y)-(P-Q)$, and $(P^2-T) + 2(P-Q)(X-P) + (X-P)^2/2$
is also merely a deterministic combination of the observations
$P^2-T$, $P-Q$, and $X-P$, providing no more and no less information
than $P^2-T$, $P-Q$, and $X-P$ on their own
without $(P^2-T) + 2(P-Q)(X-P) + (X-P)^2/2$, the mutual information satisfies
\begin{multline}
\label{nomorenoless}
I\Bigl(X;\,
X-Y,\, P-Q,\, P^2-T,\, (X-Y)-(P-Q),\, X-P,\,
(P^2-T) + 2(P-Q)(X-P) + (X-P)^2/2\Bigr) \\
= I(X;\, X-Y,\, P-Q,\, P^2-T,\, X-P).
\end{multline}
Since $X-Q = (X-P)+(P-Q)$ and $P-Q = (X-Q)-(X-P)$
establish a bijection between the pair $X-P$ and $X-Q$
and the pair $X-P$ and $P-Q$, it follows that
\begin{equation}
\label{bijection}
I(X;\, X-Y,\, P-Q,\, P^2-T,\, X-P) = I(X;\, X-Y,\, X-Q,\, X-P,\, P^2-T).
\end{equation}
Combining~(\ref{subadditivity}) and the fact that $X$, $Y$, $Q$, $P$, and $T$
are independent yields
\begin{equation}
\label{lemmapp}
I(X;\, X-Y,\, X-Q,\, X-P,\, P^2-T)
\le I(X;\, X-Y) + I(X;\, X-Q) + I(X;\, X-P,\, P^2-T).
\end{equation}
Combining~(\ref{nomorenoless})--(\ref{lemmapp})
and (\ref{squares}) from the following lemma, together with~(\ref{worst}),
yields~(\ref{mainsquare}), completing the proof of Theorem~\ref{Beaverdetails}.
\begin{lemma}
Suppose that $X$, $P$, and $T$ are independent scalar random variables
and $\gamma$ is a positive real number such that $|X| \le 1 < 3 < \gamma$,
the random variable $P$ is distributed uniformly over $[-\gamma, \gamma]$,
and $T$ is distributed uniformly over $[-\gamma^2, \gamma^2]$. Then,
\begin{equation}
\label{squares}
I(X;\, X-P,\, P^2-T) \le I(X;\, X-P) + \frac{2}{\gamma} + \frac{1}{\gamma^2},
\end{equation}
where $I$ denotes the mutual information measured in bits.
\end{lemma}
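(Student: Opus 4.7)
The strategy is to apply the chain rule for mutual information to peel off $X-P$, then reduce the remaining conditional mutual information to a pointwise application of Theorem~\ref{infoleak}. Writing $W := X-P$, the chain rule gives
\begin{equation*}
I(X;\, X-P,\, P^2-T) = I(X;\, W) + I(X;\, P^2-T \mid W),
\end{equation*}
so it suffices to prove $I(X;\, P^2-T \mid W) \le 2/\gamma + 1/\gamma^2$.

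To bound this conditional mutual information, I would condition on $W = c$. Since $P = X-c$ on that event,
\begin{equation*}
P^2 - T = (X-c)^2 - T = Z_c + c^2 - T, \qquad Z_c := X^2 - 2cX;
\end{equation*}
because $c^2$ is determined by the conditioning, observing $P^2-T$ conveys the same information about $X$ as observing $Z_c - T$. The variable $Z_c$ is a deterministic function of $X$, and $-T$ is uniform on $[-\gamma^2,\gamma^2]$ and (conditional on $W$) still independent of $X$, since $T$ is independent of $(X,P)$. The bound $|X| \le 1$ yields $|Z_c| = |X||X-2c| \le 1 + 2|c|$, while $|c| = |X-P| \le 1+\gamma$ together with $\gamma > 3$ give $1 + 2|c| \le 2\gamma + 3 < \gamma^2$. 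Theorem~\ref{infoleak} (applied conditionally, with hidden variable $Z_c$ and noise $-T$) therefore yields
\begin{equation*}
I(Z_c;\, Z_c - T \mid W=c) \le \frac{1 + 2|c|}{\gamma^2},
\end{equation*}
and the data-processing inequality from Theorem~\ref{dataproc}, via the Markov chain $X \to Z_c \to Z_c - T$ conditional on $W=c$, upgrades this to the same bound on $I(X;\, P^2-T \mid W=c)$.

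Integrating against the marginal of $W$, I would obtain $I(X;\, P^2 - T \mid W) \le (1 + 2\,E|W|)/\gamma^2$. A direct calculation, using the independence of $X$ and $P$, uniformity of $P$ on $[-\gamma,\gamma]$, and $|X|\le 1$, gives $E_P|x - P| = (\gamma^2 + x^2)/(2\gamma)$ for each fixed $|x|\le 1$ and hence $E|X-P| \le \gamma/2 + 1/(2\gamma)$, whence
\begin{equation*}
I(X;\, P^2 - T \mid W) \le \frac{1 + \gamma + 1/\gamma}{\gamma^2} = \frac{1}{\gamma} + \frac{1}{\gamma^2} + \frac{1}{\gamma^3} \le \frac{2}{\gamma} + \frac{1}{\gamma^2},
\end{equation*}
the last inequality using $1/\gamma^3 \le 1/\gamma$ for $\gamma \ge 1$; combined with the chain-rule identity this completes the argument.

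The main obstacle is executing the conditional application of Theorem~\ref{infoleak} cleanly: one must verify that after conditioning on $W$ the noise $-T$ remains uniform on $[-\gamma^2,\gamma^2]$ and independent of the conditional hidden variable $Z_c$ (both follow from $T$ being globally independent of $(X,P)$), and that the magnitude bound $1+2|c|$ stays strictly below the noise half-width $\gamma^2$ throughout the support of $W$ (this is precisely where the hypothesis $\gamma > 3$ is used). Once that conditional instance of Theorem~\ref{infoleak} is set up, the remainder is the chain rule, one invocation of data processing, and the elementary expectation bound on $|X-P|$.
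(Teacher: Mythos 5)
Your proof is correct, but it takes a genuinely different route from the paper's. The paper first massages the observations through two sufficiency/bijection steps --- adjoining $(X-P)^2 = X^2-2XP+P^2$ (a deterministic function of $X-P$), trading $P^2-T$ for $X^2-2XP+T$, and then discarding the square --- so that the second observation becomes $S+T$ with $S = X^2-2XP$ satisfying the uniform bound $|S| \le 1+2\gamma$; it then applies the chain rule, drops the conditioning via $H(S+T \mid X-P) \le H(S+T)$, and invokes the entropy bound of Lemma~\ref{basiclemma} once, unconditionally. You instead apply the chain rule immediately and work pointwise on the event $W = X-P = c$, where $P^2-T$ reduces (up to the known constant $c^2$) to $Z_c - T$ with $Z_c = X^2-2cX$ and $|Z_c| \le 1+2|c|$; a conditional application of Theorem~\ref{infoleak} plus the data-processing inequality of Theorem~\ref{dataproc} gives $(1+2|c|)/\gamma^2$ for each $c$, and averaging against the law of $W$ using $\E|X-P| \le \gamma/2 + 1/(2\gamma)$ yields $1/\gamma + 1/\gamma^2 + 1/\gamma^3$, which is in fact slightly sharper than the paper's intermediate bound $2/\gamma + 1/\gamma^2$ before you relax it to match the stated inequality. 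The verifications you flag are exactly the ones that matter and they all go through: $T$ is independent of $(X,P)$, so conditionally on $W=c$ the noise $-T$ is still uniform on $[-\gamma^2,\gamma^2]$ and independent of $Z_c$, and $1+2|c| \le 3+2\gamma < \gamma^2$ on the whole support of $W$ precisely because $\gamma > 3$. What the paper's route buys is that it never has to make sense of applying Theorem~\ref{infoleak} conditionally --- everything stays at the level of unconditional entropies, with conditioning removed by a single ``conditioning never increases entropy'' step; what your route buys is a cleaner conceptual picture, free of the bijection gymnastics, and a marginally better constant from averaging $|c|$ rather than bounding the relevant quantity in the worst case over $P$.
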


\begin{proof}
The proof begins with a string of identities, systematically simplifying
(or re-expressing) their right-hand sides.
Indeed, since $X^2-2XP+P^2$ is simply the square of $X-P$, it follows that
\begin{equation}
\label{complication}
I(X;\, X-P,\, P^2-T) = I(X;\, X-P,\, X^2-2XP+P^2,\, P^2-T).
\end{equation}
Since $X^2-2XP+T = (X^2-2XP+P^2) - (P^2-T)$ and
$P^2-T = (X^2-2XP+P^2) - (X^2-2XP+T)$ establish a bijection
between the pair $X^2-2XP+P^2$ and $P^2-T$
and the pair $X^2-2XP+P^2$ and $X^2-2XP+T$, it follows that
\begin{equation}
I(X;\, X-P,\, X^2-2XP+P^2,\, P^2-T) = I(X;\, X-P,\, X^2-2XP+P^2,\, X^2-2XP+T).
\end{equation}
Since $X^2-2XP+P^2$ is simply the square of $X-P$, it follows that
\begin{equation}
I(X;\, X-P,\, X^2-2XP+P^2,\, X^2-2XP+T) = I(X;\, X-P,\, X^2-2XP+T).
\end{equation}
The chain rule for mutual information yields
\begin{equation}
\label{chainrule}
I(X;\, X-P,\, X^2-2XP+T) = I(X;\, X-P) + I(X;\, X^2-2XP+T \;|\; X-P).
\end{equation}

The definition of mutual information states
\begin{multline}
\label{infodef}
I(X;\, X^2-2XP+T \;|\; X-P) \\
= H(X^2-2XP+T \;|\; X-P) - H(X^2-2XP+T \;|\; X-P,\, X),
\end{multline}
where $H$ denotes the differential entropy measured in bits.
Since $T$ is independent of $X$ and $P$,
the last term in the right-hand side of~(\ref{infodef}) is
\begin{equation}
\label{conditioned}
H(X^2-2XP+T \;|\; X-P,\, X) = H(X^2-2XP+T \;|\; P, X) = H(T).
\end{equation}
The fact that $T$ is distributed uniformly over $[-\gamma^2, \gamma^2]$ yields
via a simple, straightforward calculation that
\begin{equation}
\label{exact}
H(T) = \log_2(2\gamma^2).
\end{equation}

We now upper-bound the first term in the right-hand side of~(\ref{infodef}),
by defining
\begin{equation}
S = X^2-2XP
\end{equation}
and noticing
\begin{equation}
\label{maxS}
|S| \le |X|^2 + 2|X||P| \le 1 + 2\gamma.
\end{equation}
That conditioning never increases entropy yields that the first term
in the right-hand side of~(\ref{infodef}) satisfies
\begin{equation}
\label{conditioning}
H(S+T \;|\; X-P) \le H(S+T).
\end{equation}
Combining~(\ref{maxent}) and the fact that $T$ is distributed uniformly
over $[-\gamma^2, \gamma^2]$ yields that, maximizing
over all random variables $S$ such that $|S| \le \beta = 1 + 2\gamma$
(even dropping the constraint that $S = X^2-2XP$ in the maximization),
\begin{equation}
\label{unconstrained}
H(S+T) \le \frac{1+2\gamma}{\gamma^2} + \log_2(2\gamma^2)
= \frac{2}{\gamma} + \frac{1}{\gamma^2} + \log_2(2\gamma^2).
\end{equation}

Combining~(\ref{infodef})--(\ref{unconstrained}) yields that,
maximizing over any random variable $X$ such that $|X| \le 1$,
\begin{equation}
\label{other}
I(X;\, X^2-2XP+T \;|\; X-P) \le \frac{2}{\gamma} + \frac{1}{\gamma^2}.
\end{equation}
Combining~(\ref{complication})--(\ref{chainrule}) and~(\ref{other})
yields~(\ref{squares}).
\end{proof}

\section{Chebyshev series for odd functions}
\label{chebappendix}

In this appendix, we review the approximation of odd functions
via Chebyshev series, as summarized for general (not necessarily odd) functions
in Sections~4--6 of~\cite{curry}.

Given a real-valued differentiable function $f$ on $[-z, z]$ that is odd,
that is,
\begin{equation}
f(-x) = -f(x)
\end{equation}
for any real number $x$ such that $-z \le x \le z$,
we can approximate $f$ via its Chebyshev series, as follows.
First, we select a sufficiently large positive integer $n$ and compute
\begin{equation}
\label{chebcoeffs}
c_j = \frac{2}{n} \sum_{k=1}^n
      \cos\left(\frac{j(2k-1)\pi}{4n}\right) \;
      f\left(z \cos\left(\frac{(2k-1)\pi}{4n}\right)\right)
\end{equation}
for $j = 1$, $3$, \dots, $2n-1$;
the approximation will converge as $n$ increases.
Having calculated $c_1$, $c_3$, \dots, $c_{2n-1}$ from~(\ref{chebcoeffs}),
we can compute a good approximation to $f$ evaluated at any real number $y$
such that $-z \le y \le z$:
\begin{equation}
\label{approx}
f(y) \approx \sum_{j=1}^n c_{2j-1} \, T_{2j-1}(y/z),
\end{equation}
where $T_j(x)$ denotes the Chebyshev polynomial of degree $j$
evaluated at $x = y/z$.

To calculate the right-hand side of~(\ref{approx}) efficiently
using only additions and multiplications involving the input
\begin{equation}
\label{normalized}
x = y / z,
\end{equation}
we evaluate the sums
\begin{equation}
\label{partialsums}
s_{2k-1} = \sum_{j=1}^k c_{2j-1} \, T_{2j-1}(x)
\end{equation}
for $k = 1$, $2$, \dots, $n$,
via the following recurrence:
\begin{equation}
\label{update}
t_{2k+1} = (4 x^2 - 2) t_{2k-1} - t_{2k-3}
\end{equation}
and
\begin{equation}
s_{2k+1} = s_{2k-1} + c_{2k+1} t_{2k+1},
\end{equation}
started with
\begin{equation}
t_1 = x,
\end{equation}
\begin{equation}
t_3 = (4 x^2 - 3) x,
\end{equation}
\begin{equation}
s_1 = c_1 t_1,
\end{equation}
and
\begin{equation}
s_3 = s_1 + c_3 t_3.
\end{equation}
The final sum $s_{2n-1}$ is equal to the right-hand side of~(\ref{approx}),
due to~(\ref{normalized}) and~(\ref{partialsums}).

\section{Review of stochastic gradient descent with minibatches}
\label{sgdrev}

As discussed in any standard reference on modern machine learning,
such as that of~\cite{bottou-curtis-nocedal},
minibatched stochastic gradient descent (SGD) calculates a vector $w$
of parameters that minimizes the expected value $\E(\ell(X; w))$,
where $\ell$ is a function of both $w$ and a random vector $X$.
The expected value is known as the ``risk''
and $\ell$ is known as the ``loss.''
SGD minimizes the expected loss without having direct access
to the probability distribution of $X$, instead relying solely
on samples of $X$ (usually drawn at random from a so-called ``training set'').
Minibatched SGD generates a sequence of approximations via iterations,
\begin{equation}
\label{SGD}
w^{(k+1)} = w^{(k)} - \frac{\eta}{m} \sum_{j=1}^{m}
            \frac{\partial}{\partial w} \ell(x^{(j,k)}; w)
            \biggm{|}_{w = w^{(k)}},
\end{equation}
where $\eta$ is a positive real number known as the ``learning rate,''
$\partial/\partial w$ denotes the gradient with respect to $w$
(which is $\ell$'s second argument),
$m$ is the number of samples in a so-called ``minibatch,''
and $x^{(1,k)}$, $x^{(2,k)}$, \dots, $x^{(m,k)}$ denote samples from $X$.
The iterations fail to minimize the expected loss when $\eta$ is constant
rather than decaying to 0 as the iterations proceed, but fixing $\eta$
at a sensibly small value is a common practice in machine learning
(and still ensures convergence to the minimum of the empirical risk
under suitable conditions on $\ell$,
that is, to the minimum of the average of the loss, averaged over all samples
in a fixed, finite training set).

Minimizing the regularized objective function
$\E(\ell(X; w)) + \rho \|w\|^2_2 / 2$,
where $\rho$ is a nonnegative real number
and $\|w\|_2$ denotes the Euclidean norm of $w$,
is a common way of ensuring that $w$ not become too large.
Adding such regularization to SGD is also known as ``weight decay,''
and the iterations in~(\ref{SGD}) become
\begin{equation}
w^{(k+1)} = w^{(k)} - \frac{\eta}{m} \sum_{j=1}^{m}
            \frac{\partial}{\partial w} \ell(x^{(j,k)}; w)
            \biggm{|}_{w = w^{(k)}} -\;\; \eta \rho w^{(k)};
\end{equation}
we used some weight decay for the multinomial logistic regression
of the measured data in Subsection~\ref{realdata}
(but used no weight decay for any other results reported above,
nor did we use any weight decay for iterative updates
to the so-called bias offsets in $c$ from the following appendix).

\section{Review of generalized linear models}
\label{glmrev}

As discussed in any standard reference on generalized linear models,
such as that of~\cite{mccullagh-nelder},
a generalized linear model regresses a random vector $Y$ of so-called targets
against a matrix $X$ of so-called covariates via the model
\begin{equation}
\label{glm}
g(\E(Y|X)) = Xw + c,
\end{equation}
where $g$ is known as the ``link function,''
$\E(Y|X)$ is the conditional expectation
of the vector $Y$ of targets given $X$ (the matrix of covariates),
$w$ is a vector of so-called ``weights'' (or ``parameters''),
and $c$ is a vector whose entries are independent of the values of $X$,
known as ``biases.''
Table~\ref{glmtab} lists several special cases of generalized linear models.
Given independent samples of pairs $(x^{(1)}, y^{(1)})$,
$(x^{(2)}, y^{(2)})$, \dots, $(x^{(n)}, y^{(n)})$,
the standard method of fitting the vectors $w$ of weights and $c$ of biases
is to minimize the negative of the natural logarithm of the likelihood,
that is, to minimize the empirical risk
$-\frac{1}{n} \sum_{k=1}^n \ln(p(y^{(k)} | x^{(k)}; w, c))$,
where $p(y^{(k)} | x^{(k)}; w, c)$ denotes the probability
(at the parameter values $w$ and $c$) of observing $y^{(k)}$ given $x^{(k)}$.
Minimizing $-\E(\ln(p(Y|X; w, c)))$
via the minibatched stochastic gradient descent
of the previous appendix is another (nearly equivalent) approach.

The probability distribution of $Y$ given $X$
for all of the generalized linear models considered in Table~\ref{glmtab},
except for probit regression, is a so-called ``exponential family'' of the form
\begin{equation}
\label{expfam}
p(y|x; w, c) = h(y) \exp(y^\top xw + y^\top c - \psi(xw + c)),
\end{equation}
where $h$ is a nonnegative-valued function,
and $\psi$ is known as the ``log partition function'':
$\psi(\theta) = \|\theta\|_2^2/2$ for the normal distribution,
$\psi(\theta) = \ln(1+\exp(\theta))$ for the Bernoulli distribution,
and $\psi(\theta) = \exp(\theta)$ for the Poisson distribution.
For all these cases, substituting~(\ref{expfam}) and taking the gradient
with respect to $\theta$ of both sides of $\int p(y|x; w, c) \, dy = 1$ yields
after a straightforward calculation that
$\partial \psi/\partial \theta = \E(Y|x)$, where $\theta = xw + c$;
combined with~(\ref{glm}) this yields that
$g(\partial \psi/\partial \theta) = \theta$,
so the link $g$ is the inverse of $\partial \psi/\partial \theta$.

Combining~(\ref{expfam}) and the chain rule yields that the gradient
with respect to $w$ of $\ln(p(y|x; w, c))$ is
\begin{equation}
\frac{\partial}{\partial w} \ln(p(y|x; w, c)) = x^\top \left( y
- \frac{\partial \psi}{\partial \theta}\biggm|_{\theta = xw + c} \right)
\end{equation}
and the gradient with respect to $c$ of $\ln(p(y|x; w, c))$ is
\begin{equation}
\frac{\partial}{\partial c} \ln(p(y|x; w, c)) =
y - \frac{\partial \psi}{\partial \theta}\biggm|_{\theta = xw + c}.
\end{equation}
Thus, the stochastic gradient descent of the previous appendix requires 
nothing more than addition, matrix-vector multiplications,
and evaluation of the inverse ($\partial \psi/\partial \theta$)
of the link $g$.
Needless to say, the inverse of the identity function is the identity function,
the inverse of $\ln$ is $\exp$, and the inverse of the inverse
of the cumulative distribution function $\Phi$
for the standard normal distribution is $\Phi$.
A simple calculation shows that the inverse of the logit function
$g(\mu) = \ln(\mu/(1-\mu))$ is the standard logistic function
$\partial \psi/\partial \theta = 1/(1+\exp(-\theta))$
and that the softmax detailed in Subsection~\ref{softmax} above
inverts $\ln$ applied entrywise to a probability vector
(the softmax simply applies $\exp$ entrywise and then normalizes
to form a proper probability distribution).

The probability distribution for probit regression is an exponential family,
but not of the form in~(\ref{expfam}); we handle this special case
as detailed in Sub-subsection~\ref{probitlink}.

\begin{table}
\hspace{-.65em}
\begin{tabular}{lllll}
\underline{\phantom{$g$}name\phantom{$g$}} &
\underline{\phantom{$g$}distribution\phantom{$g$}} &
\underline{\phantom{$g$}name of link\phantom{$g$}} &
\underline{\phantom{$g$}link $g(\mu)$\phantom{$g$}} &
\underline{\phantom{$g$}range of mean $\mu$\phantom{$g$}}
\vspace{.8em} \\
linear least squares & $N(\mu, I)$ & identity & $\mu$ & all real vectors
\vspace{.4em} \\
logistic regression & Bernoulli & logit & $\ln(\mu/(1-\mu))$ &
unit interval $[0, 1]$
\vspace{.4em} \\
probit regression & Bernoulli & probit &  $\Phi^{-1}(\mu)$ &
unit interval $[0, 1]$
\vspace{.4em} \\
Poisson regression & Poisson & log & $\ln(\mu)$ & nonnegative real numbers
\vspace{.4em} \\
multinomial logistic & multinomial & log performed & $\ln$ of each &
prob.\,simplex $\sum_{j=1}^k\hspace{-.1em}\mu_j\hspace{-.05em}=\hspace{-.05em}1$
\\
regression & & entry-by-entry & entry of $\mu$ &
and $\mu_1$, $\mu_2$, \dots, $\mu_k \ge 0$
\vspace{.4em}
\end{tabular}
\caption{Special cases of generalized linear models,
where $N(\mu, I)$ denotes the (possibly multi-variate) normal distribution
with mean $\mu$ and variance-covariance matrix being the identity matrix $I$,
and $\Phi$ is the cumulative distribution function
for the standard normal distribution
(so $\Phi^{-1}$ is the corresponding quantile function, the inverse of $\Phi$);
``linear least squares'' is also known as ``ordinary least squares''
or the ``general'' linear model
--- a special case of the ``generalized'' linear model}
\label{glmtab}
\end{table}

\section*{Acknowledgements}

We would like to thank Shubho Sengupta and Andrew Tulloch.

\clearpage

\bibliography{crypto}
\bibliographystyle{imaiai}

\end{document}